\numberwithin{equation}{section}
\newcommand{\C}{\mathbb{C}}
\newcommand{\R}{\mathbb{R}}
\newcommand{\Z}{\mathbb{Z}}
\renewcommand{\H}{\mathbb{H}}
\newcommand{\E}{{\mathcal E}}
\newcommand{\F}{\mathbf{F}}
\newcommand{\g}{\mathbf{g}}
\newcommand{\J}{\mathscr{J}}
\newcommand{\abracket}[1]{\left\langle#1\right\rangle}
\newcommand{\bbracket}[1]{\left[#1\right]}
\newcommand{\fbracket}[1]{\left\{#1\right\}}
\newcommand{\bracket}[1]{\left(#1\right)}
\newcommand{\aabracket}[1]{\left\llangle#1\right\rrangle}
\newcommand{\llb}{\llbracket}
\newcommand{\rrb}{\rrbracket}
\newcommand{\llp}{(\!(}
\newcommand{\rrp}{)\!)}
\newcommand{\mc}{\mathcal}
\newcommand{\mr}{\mathrm}
\newcommand{\pa}{\partial}
\renewcommand{\dbar}{\bar\pa}
\newcommand{\OO}{{\mathcal O}}
\newcommand{\BV}{Batalin--Vilkovisky }
\newcommand{\into}{\hookrightarrow}
\newcommand{\Ol}{\mathcal O_{\mr{loc}}}
\newcommand{\iso}{\cong}
\newcommand{\bb}{\mathbf{b}}
\DeclareMathOperator{\HH}{H}
\DeclareMathOperator{\Aut}{Aut}
\DeclareMathOperator{\Sym}{Sym}
\DeclareMathOperator{\Hom}{Hom}
\DeclareMathOperator{\Tr}{Tr}
\DeclareMathOperator{\PV}{PV}
\DeclareMathOperator{\Obs}{Obs}
\DeclareMathOperator{\Dens}{Dens}
\newcommand{\A}{\mathcal A}
\renewcommand{\L}{\mathcal L}
\theoremstyle{plain}
\newtheorem{thm}{Theorem}[section]
\newtheorem{thm-defn}{Theorem/Definition}[section]
\newtheorem{lem-defn}[thm]{Lemma/Definition}
\newtheorem{prop}[thm]{Proposition}
\newtheorem{cor}[thm]{Corollary}
\theoremstyle{definition}
\newtheorem{defn}[thm]{Definition}
\newtheorem{eg}[thm]{Example}
\theoremstyle{remark}
\newtheorem{rmk}[thm]{Remark}
	\def\curved{\tikz[baseline=.1ex]{
			\draw[ ->] (0,0.32) arc (30:330:0.5);}
	}
\begin{document}

  \title{\mbox{Dispersionless Integrable Hierarchy via Kodaira--Spencer Gravity}}
 \author{Weiqiang He, Si Li, Xinxing Tang, and Philsang Yoo}

  \address{
W. He: Department of Mathematics, Sun Yat-Sen University, Guangdong, China;
}
\email{hewq@mail2.sysu.edu.cn}
  \address{
S. Li: 
\begin{tabular}{l}
Yau Mathematical Sciences Center, Tsinghua University, Beijing, China\\ Institute for Advanced Study, New Jersey, USA
\end{tabular}
}

\email{sili@mail.tsinghua.edu.cn}
  \address{
X. Tang: Yau Mathematical Sciences Center, Tsinghua University, Beijing, China;
}
\email{tangxinxing@mail.tsinghua.edu.cn}
  \address{
P. Yoo: Department of Mathematics, Yale University, New Haven, USA;
}
\email{philsang.yoo@yale.edu}

\date{}
\maketitle

\begin{abstract} We explain how dispersionless integrable hierarchy in 2d topological field theory arises from the Kodaira--Spencer gravity (BCOV theory). The infinitely many commuting Hamiltonians  are given by the current observables associated to the infinite abelian symmetries of the Kodaira--Spencer gravity. We describe a BV framework of effective field theories that leads to the B-model interpretation of dispersionless integrable hierarchy.
\end{abstract}

\tableofcontents

\section{Introduction}

It is extensively studied in the literature that 2d topological field theories give rise to integrable hierarchies. In this paper, we present a natural connection between the classical dispersionless integrable hierarchy in 2d topological field theory and the classical geometry of Kodaira--Spencer gravity (or BCOV theory \cite{BCOV}). The integrable hierarchy structure arises when we extend BCOV's theory \cite{BCOV} to include gravitational descendants as formulated in \cite{Si-Kevin}. This is motivated by studying the projection (as proposed in \cite{Si-review})
 $$
 \pi\colon  X\times \Sigma \to \Sigma.
 $$
 Here $\Sigma$ is a surface with a Calabi--Yau structure, namely, the complex plane $\C$, the punctured plane $\C^\times$, or an elliptic curve $E$; $X$ is the target space of 2d topological field theory; and $\pi$ is the natural projection map. 

We put B-model on $\Sigma$ (this is where we need Calabi--Yau structure), and imagine putting either A-model or B-model on $X$. Compactifying along $X$, we obtain an effective 2d chiral field theory on $\Sigma$ that encodes geometric invariants of $X$.  This field theory can be viewed as a generalized Kodaira--Spencer gravity in a similar fashion as formulated in \cite{Si-Kevin,vertex}. We find that the induced BV master equation of such effective theory encodes precisely the integrable hierarchy structure associated to the target $X$. Geometrically, the infinitely many mutually commuting Hamiltonians originate from the current observables associated to the infinite abelian symmetries of the effective 2d chiral theory.

Motivated by this, we introduce the notion of $H$-valued BCOV theory for a small phase space $H$ of a 2-dimensional topological field theory. The effective field theory as above will be of this type. We explain the corresponding classical dispersionless integrable hierarchy in this generality. Note that such interpretation of integrable hierarchy was used in \cite{L-elliptic,vertex} even at the quantum level to  solve the full higher genus B-model on elliptic curves generalizing \cite{Dijkgraaf-elliptic} (this corresponds to the special case when $X=\mr{pt}, \Sigma=$ elliptic curve). It would be extremely interesting to extend our classical analysis to the quantum case in general, i.e., the integrability meaning of the quantum master equation for the resulting effective theory on $\Sigma$. We hope to address this issue in some future work.

The description of the effective field theory on $\Sigma$ requires an essential use of BV formalism \cite{BV} (in fact, a degenerate BV theory). It leads to a generalization of BCOV theory as formulated in \cite{Si-Kevin}. In the first part of this paper, we present a review on BV formalism that will be used extensively in this paper. We also introduce a BV framework for establishing the effective field theory obtained by compactifying an internal space.  The second part of this paper is mainly to explain the precise connection between dispersionless integrable hierarchy and the classical BV master equation of BCOV type theory.

 \noindent \textbf{Conventions}:
 \begin{itemize}
\item Let $V=\bigoplus\limits_k V_k$ be a graded vector space, with $V_k$ being the degree $k$ components. We denote $V[m]$ by the degree $m$ shifted space such that $V[m]_k:= V_{m+k}$.
\item Let $(\A,\cdot)$ be a graded commutative algebra. We denote the graded commutator by $[-,-]$, i.e., if we write the degree of elements $a,b \in \A$ as $|a|, |b|$, then 
$$
   [a,b]:= a\cdot b-(-1)^{|a| |b|}b \cdot a.
$$
In general, we always use the Koszul sign rule in dealing with graded objects.
\end{itemize}

 \noindent \textbf{Acknowledgments}. The work of S.L. is partially supported by grant 11801300 of NSFC  and grant Z180003 of Beijing Natural Science Foundation. The work of X.T. is partially supported by Tsinghua Postdoc Grant 100410058. Part of this work was done while S.L. was visiting Institute for Advanced Study in Fall 2019. S.L. thanks for their hospitality and provision of excellent working enviroment.

 \section{Classical BV formalism}

We collect geometric basics on field theories in the \BV (BV) formalism to be used in this paper. For a slightly different presentation with more details on the subject related to the current paper, one may want to refer to the books  \cite{CostelloBook,CG2}.

\subsection{Free BV theory}\label{sec:free-BV}
We will define a classical BV theory as a free part  together with additional specified data encoding the interaction.  Let us start with the free BV theory which, roughly speaking, is encoded as a sheaf on a spacetime manifold with $(-1)$-shifted symplectic structure (or a $(-1)$-shifted Poisson structure in the degenerate case).

 \begin{defn}
 A \emph{free classical BV theory} on a manifold $M$ consists of $(E, Q,\langle -,-\rangle)$ where
 \begin{itemize}
 \item[(1)] $(E,Q)$ is a complex of vector bundles on $M$, where $Q$ is a differential operator making $(\E,Q)$ an elliptic complex. Here $\E=\Gamma(M, E)$ are smooth sections of $E$.
 \item[(2)] a bundle morphism of degree $-1$
 $$
     \abracket{-,-}\colon E\otimes E \to \Dens(M)
  $$
  which is a fiberwise skew-symmetric and non-degenerate pairing. Here $\Dens(M)$ is the density line bundle at cohomology degree $0$. It defines
  $$
      \omega(\alpha, \beta):=\int_M \abracket{\alpha, \beta}, \quad \alpha, \beta\in \E_{c}
  $$
  which can be viewed as a symplectic pairing of degree $-1$. Here $\E_{c}\subset \E$ is the subspace of sections with compact support.
  \item[(3)]  the differential operator $Q$ is graded skew self-adjoint with respect to $\omega$:
  $$
    \omega(Q\alpha, \beta)=-(-1)^{|\alpha|}\omega(\alpha, Q\beta), \quad \forall \alpha, \beta\in \E_{c}.
  $$
  Here $|\alpha|$ is the degree of $\alpha$.
 \end{itemize}
\end{defn}

As the last condition amounts to requiring $\omega$ to be $Q$-closed, we say the triple $(\E, Q, \omega)$ defines a $(-1)$-shifted dg symplectic space. We often abuse a notation to denote a free classical BV theory by $(\E, Q, \omega)$. We will always assume $M$ is oriented and hence identify the density bundle $\Dens(M)$ with the bundle of top differential forms on $M$.

\begin{defn}
We define the space of \emph{functionals} on $\E$ to be
 \[
 \OO(\E): =  \prod_{k\geq 0  } \Hom( \E^{\otimes k} , \C  )_{S_k}.
 \]
 Here $\E^{\otimes k}$ means the following completed tensor product (smooth sections of $E^{\boxtimes k}$)
 $$
 \E^{\otimes k}:=\Gamma(\underbrace{X\times \cdots \times X}_{k}, E\boxtimes \cdots \boxtimes E). 
 $$
$\Hom$ means continuous maps (i.e. distributions), and $(-)_{S_k}$ means taking $S_k$-coinvariants with respect to the graded permutation. Elements of $\Hom( \E^{\otimes k} , \C  )_{S_k}$ can be viewed as $k$-th degree homogenous polynomial functions on the graded vector space $\E$.
 \end{defn}

If $M$ is compact, then a functional $F \in \OO(\E)$ is called \emph{local} if each Taylor component $F_k \in \Hom( \E^{\otimes k} , \C   )_{S_k}$ is a finite sum of terms of the form \[ F_k (\phi) = \int_M (D_1 \phi)\cdots (D_k \phi)  d\mu_M, \] where $D_i$ is a differential operator and $d\mu_M$ is a volume form of $M$. If $M$ is noncompact, we simply have the same expression but viewing it as a functional well-defined on $\E_c$. 

 Alternately, we can define the space of local functionals by
 $$
 \Ol(\E) = \Dens(M) \otimes_{\mc D_M} \mc O(J(E))$$
 where $\mc D_M$ is the sheaf of differential operators of $M$ and $J(E)$ is the sheaf of sections of the $\infty$-jet bundle of $E$; a section of $\Ol(\E)$ should be thought of as for input $\phi\in\mc E$ producing a density element $F(\phi)$ on $M$ such that $F(\phi)(x)$ depends only on the $\infty$-jet of $\phi$ at $x$, modulo total derivatives. This captures the intuition that physics must be local and have no spooky action at a distance.

To proceed, one has to find a Poisson structure on a space of functionals. In the finite-dimensional setting, the Poisson kernel is simply the inverse matrix of the symplectic pairing.  In our infinite-dimensional setting, this amounts to asking the Poisson kernel to be the integral kernel of the identity operator with respect to $\omega$, that is, $K_0$ satisfying
 \[\omega(K_0 (x,y),\phi(y) )=\phi(x)\qquad\text{for all}\quad \phi\in \mc E.\]
  Since $\omega$ is given by integration, $K_0$ behaves like a $\delta$-function. We can view $K_0$ as a distributional section of $E\times E$ supported on the diagonal of $M\times M$. It can be checked that $K_0$ is graded symmetric, and hence defines a distributional section of $\Sym^2(\E)$.

It is worth pointing out that the Poisson kernel $K_0$ is in fact graded symmetric instead of skew-symmetric as in symplectic geometry. This is due to the fact that our symplectic pairing has odd degree. Let us explain this graded symmetry in the case when $M$ is 0-dimensional and hence $\mc E=V$ is a finite-dimensional graded vector space. Let $\omega$ be a $(-1)$-symplectic pairing on $V$. Let $V^*$ be the linear dual. Then $\omega$ defines an isomorphism
$$
\omega: V^*\to V[1]
$$
which induces an identification
\[\bigwedge {}\! ^2 (V^*)\to \bigwedge{}\! ^2(V[1])\iso \Sym^2(V)[2]. \]
Here we have used the canonical identification $\bigwedge^k(V[1])\simeq \Sym^k(V)[k]$ for any graded vector space $V$. Under this identification, the symplectic form $\omega$, which is viewed as a degree $-1$ element of $\bigwedge^2 (V^*)$, is sent to $K_0$, a degree $1$ element of $\Sym^2(V)$. Then $K_0$ is precisely the inverse Poisson kernel of $\omega$. The same reasoning applies to our infinite-dimensional space $\E$. We leave this check to careful readers.

\begin{defn}
The \emph{BV kernel} of a free classical BV theory $(E, Q, \langle -,-\rangle)$ is defined to be the degree 1 distributional section $K_0$ of $\Sym^2(\E)$  such that
$$
\omega(K_0 (x,y),\phi(y) )=\phi(x), \quad \forall \phi \in \E.
$$
\end{defn}

\begin{rmk}The singularity of $K_0$ around the diagonal signals the infinite-dimensional nature and is at the heart of the ultra-violet divergence problem in quantum field theory. \end{rmk}

With the Poisson kernel $K_0$, we would like to define a Poisson bracket on a space of functionals as in  the finite-dimensional case. However, because of the distributional nature of $K_0$, this does not work on $\OO(\E)$. Instead, it makes sense on $\Ol(\E)$, defining a degree 1 bracket (called the \emph{BV bracket})
\[\{-,-\}_{\mr{BV}}\colon \Ol(\E)\times \Ol(\E)\to \Ol(\E).\]

This can be seen as follows.  Let $S\in  \Ol(\E)$ be a local functional. If we write a generic field by  $\phi\in\E$, then the variation of $S$ can be uniquely brought into the form
$$
\delta S= \int_M \abracket{\delta \phi, \mc L_S (\phi) }
$$
by removing all derivatives on $\delta \phi$ via integration by parts. It is easy to see that $\mc L_S(\phi)$ is a uniquely determimed local expression that consists of derivatives of fields. Then the BV bracket between local functionals can be defined by
$$
  \fbracket{S_1, S_2}_{\mr{BV}}= \int_M \abracket{\mc L_{S_1}, \mc L_{S_2}}.
$$
Intuitively, this is precisely how the Poisson bracket is defined in the finite-dimensional case. The local functional $S$ defines a Hamiltonian vector field $\{S,-\}_{\mr{BV}}$, which can be identified as an infinitesimal transformation on fields by
$$
\delta_S \phi =\L_S(\phi).
$$
Then the BV bracket between two local functionals $S_1, S_2$ can be written as
$$
\{S_1, S_2\}_{\mr{BV}}=\delta_{S_1}S_2= \int_M \abracket{\mc L_{S_1}, \mc L_{S_2}}.
$$

The differential $Q$ on $\E$ induces dually a differential on functional spaces. Explicitly, let us consider 
$S \in \Hom( \E^{\otimes k} , \C)$ which can be viewed as a multi-linear map 
$$
S:  \phi_1\otimes\cdots \otimes \phi_k \to S(\phi_1, \cdots, \phi_k), \quad \phi_i \in \E.
$$
Then $QS \in \Hom( \E^{\otimes k} , \C)$ is defined by 
$$ 
QS(\phi_1, \cdots, \phi_k):=\sum_{i=1}^k\pm S(\phi_1, \cdots, Q\phi_i, \cdots, \phi_k). 
$$
Here $\pm$ is the Koszul sign by permuting graded objects. Since $Q$ is a differential operator, it preserves locality and defines a differential on  $\Ol(\E)$.  The fact $Q(\omega)=0$ implies that $Q$ is compatible with the BV bracket. This defines a dg Lie algebra
$$
(\Ol(\E)[-1], Q, \{ -,-\}_{\mr{BV}})
$$
on the degree $-1$ shifted space $\Ol(\E)[-1]$. We illustrate this by the Chern--Simons theory below in Example \ref{eg:CS}. For more details in generality we refer to \cite[Section 5.3]{CostelloBook}.

Taking the BV bracket with a local functional can be defined on a larger space, called observables. Briefly speaking, observables are defined to be functions on fields, which we will  denote by $\Obs$. A theory of observables in BV formalism is systematically developed in \cite{CG2} and we refer there for more precise discussions. For our application later, we will only need the following fact about the BV bracket with local functionals.

Let $J\in \Ol(\E)$ be a local functional. Its Hamiltonian vector field $\{J,-\}_{\mr{BV}}$ induces an infinitesimal transformation $
\delta_J
$. It defines a derivation on observables
$$
\delta_J\colon  \Obs\to \Obs.
$$
In other words, we have a well-defined BV bracket
$$
\{-,-\}_{\mr{BV}}\colon \Ol(\E)\times \Obs \to \Obs, \quad  \{J, O\}_{\mr{BV}}:=\delta_J O.
$$
Note that local functionals can be viewed as observables $\Ol(\E)\subset \Obs$ when the spacetime manifold $M$ is compact. Therefore this above BV bracket extends that on local functionals in that case. We explain some examples of observables and BV brackets in Example \ref{eg:CS}. In Section \ref{sec:3}, we will encounter observables which are supported on codimension 1 subspaces that will play an important role in our description of integrable hierarchy.

\subsection{Classical master equation}

To describe dynamics of a BV theory in general, one has to consider an interaction term. It can be regarded as a deformation of the free theory by a solution of the Maurer--Cartan equation associated to the dg Lie algebra
$
(\Ol(\E)[-1], Q, \{ -,-\}_{\mr{BV}}).
$

 \begin{defn} A local functional $I\in \Ol(\E)$ of degree $0$ is said to satisfy the \emph{classical master equation} if
\[QI+{1\over 2}\{I,I\}_{\mr{BV}}=0.\]
 \end{defn}

 It can be seen that such $I$ gives a new differential
 $$
 Q+\{I,-\}_{\mr{BV}}=Q+\delta_I
 $$
 which squares to zero.  In field theory, $Q+\{I,-\}_{\mr{BV}}$ defines the classical BRST operator, while $Q$ is the leading linearized transformation.

The action functional $S$ associated to the free classical BV theory $(E,Q,\langle -,-\rangle )$ together with a local functional $I$ satisfying the classical master equation is \[S(\phi)  ={1\over 2} \int_M \langle \phi,Q \phi\rangle  + I(\phi), \quad \phi \in \E_{c}. \]
The classical master equation for $I$ is equivalent to the following  traditional form
$$
\{S, S\}_{\mr{BV}}=0.
$$

\begin{eg}[Chern--Simons theory]\label{eg:CS} Let $M$ be a three-dimensional oriented manifold. Let $\g$ be a Lie algebra equipped with a non-degenerate trace pairing $\Tr$. For simplicity, we consider Chern--Simons theory for a trivial principal bundle. The space of fields in the BV formalism is
$$
    \E:= \Omega^\bullet(M)\otimes \g [1].
$$
Here the degree shift ensures that the connection 1-form $\Omega^1(M)\otimes \g$ sits at degree $0$. The $(-1)$-shifted symplectic pairing is given by
$$
\omega(\alpha, \beta)=\int_M \Tr(\alpha\wedge \beta), \quad \alpha\in \Omega_c^{k}\otimes \g, \beta\in \Omega_c^{3-k}\otimes \g.
$$
The differential $Q=d$ is the de Rham differential. The Chern--Simons functional in the BV formalism is given by
$$
CS[\A]=\int {1\over 2}\Tr(\A\wedge d\A)+{1\over 6}\Tr(\A \wedge [\A, \A]), \quad \A\in \E_c.
$$
The first term is the free part, and the second term is the interaction part, denoted by $I$. The BV bracket $\{CS,-\}_{\mr{BV}}$ introduces an infinitesimal transformation $\delta_{CS}$. To compute $\delta_{CS}$, we follow the recipe described above and consider the variation. We find
$$
\delta CS= \int \Tr \bracket{\delta \A\wedge \bracket{d\A+{1\over 2}[\A, \A]}}.
$$
This says that $\L_{CS}(\A)=d\A+{1\over 2}[\A, \A]$, hence
$$
    \delta_{CS}\A=d\A+{1\over 2}[\A, \A].
$$
This formula is read in components as follows. Let us write 
$$
\A:= \sum\limits_{i=0}^3 \A^i, \quad  \text{where}\quad \A^i\in \Omega^i(X)\otimes \g[1].
$$ 
Then $\delta_{CS}\A^i=(\delta_{CS}\A)^i$ equals the $i$-form part of $d\A+{1\over 2}[\A, \A]$. For example, 
$$
  \delta_{CS}\A^0={1\over 2}[\A^0, \A^0], \quad \delta_{CS}\A^1=d\A^0+[\A^0, \A^1]. 
$$
Note that $\delta_{CS}$ is precisely the Chevalley--Eilenberg differential associated to the dg Lie algebra $\Omega^\bullet(X)\otimes \g$. This immediately implies
$$
\delta_{CS}^2=0, \quad \text{or equivalently}\quad \{CS, CS\}_{\mr{BV}}=0.
$$
In terms of the interaction term $I$,  it induces an infinitesimal transformation
$$
\delta_I \A={1\over 2}[\A, \A]
$$
and satisfies the following classical master equation
$$
  d I +{1\over 2}\{I, I\}_{\mr{BV}}=0.
$$

Next we explain some examples of observables. 

Given $\xi \in \g^*$ and a point $x\in M$, we define a linear observable $\mc O_x^{\xi}$ of degree 1 by 
$$
\mc O_x^{\xi} : \E\to \C, \quad \A\to  \mc O_x^{\xi}(\A):=\langle \xi, \mc A^0(x)\rangle. 
$$
In general, given $
\Xi= \xi_1\wedge \cdots \wedge \xi_k \in \wedge^k \g^*
$, we define a degree $k$ observable $\mc O_x^{\Xi}$ by 
$$
\mc O_x^{\Xi}(\A):= \mc O_x^{\xi_1}(\A) \cdots  \mc O_x^{\xi_k}(\A).
$$
This defines a map 
$$
\mc O_x\colon   \wedge^\bullet(\g^*)\to \Obs, \quad \Xi \to \mc O_x^{\Xi}. 
$$
Let $d_{\mr{CE}}: \wedge^\bullet(\g^*)\to \wedge^\bullet(\g^*)$ denote the Chevalley--Eilenberg differential. It is a good exercise (using $\delta_{CS}\A^0={1\over 2}[\A^0, \A^0]$) to show that $\mc O_x$ is in fact a cochain map 
$$
\mc O_x\colon   \bracket{\wedge^\bullet(\g^*), d_{\mr{CE}}}\to  \bracket{\Obs, \delta_{CS}=\{CS,-\}_{\mr{BV}}}. 
$$
In particular, we find that $\mc O_x$ maps Lie algebra cohomologies to $\delta_{CS}$-closed observables. 

As another example, let $S^1$ be a circle inside $M$. Consider a representation $R$ of $\g$ and let $\Tr_R \colon \g \to \C$ be the trace of an element of $\g$ in a representation $R$. We define the following linear observable $\oint_{S^1}^R $ by
$$
\oint_{S^1}^R \colon \E \to \C, \quad \A \to  \int_{S^1} \Tr_R \A^1.  
$$
Let us compute its BV bracket with $CS$. We find 
$$
\fbracket{CS, \oint_{S^1}^R }_{\mr{BV}}(\A)= \oint_{S^1}^R  \delta_{CS} \A^1= \int_{S^1} \Tr_R\bracket{d\A^0+[\A^0, \A^1]}=0. 
$$
In particular, $\oint_{S^1}^R$ is a $\delta_{CS}$-closed observable. 

\end{eg}

\subsection{Homotopy transfer and effective theory}\label{sec:HT} Let $(E, Q, \omega)$ be our infinite-dimensional $(-1)$-shifted symplectic space.  We consider the cohomology
$$
 \HH:= H(\E, Q).
$$
Our assumption $(\E, Q)$ being an elliptic complex implies that $\HH$ is a finite-dimensional graded vector space. Since $\omega$ is compatible with $Q$, $\omega$ descends to define a $(-1)$-shifted symplectic pairing $\omega_{\HH}$ on $\HH$. In other words, we end up with a finite-dimensional $(-1)$-shifted symplectic space $(\HH, \omega_{\HH})$.

 Let
$$
\OO(\HH):= \prod_{k\geq 0}\Hom(\HH^{\otimes k}, \C)_{S_k}
$$
be the space of formal functions on $\HH$. Let
$$
   K_{\HH}=\omega_{\HH}^{-1}\in \Sym^2(\HH)
$$
be the BV kernel on the cohomology. The Poisson kernel $ K_{\HH}$ defines a BV bracket
$$
  \fbracket{-,-}_{\HH}\colon \OO(\HH)\times \OO(\HH)\to \OO(\HH).
$$

 Let  $I\in \Ol(\E)$ be a solution of classical master equation. In this section, we explain how such data can be transferred to define a formal function $I_{\HH}\in \OO(\HH)$  that satisfies the classical master equation $\{I_{\HH}, I_{\HH}\}_{\HH}=0$.

 \begin{defn}\label{defn:HT} Let us consider $\HH:= H(\E, Q)$ equipped with the zero differential.  A \emph{contraction data} $(i,\pi, \hat P)$ of $(\E, Q)$
 	\begin{equation*}
			\hat P\curved (\E, Q)\overset{\pi}{\underset{i}\rightleftarrows} (\HH, 0)
		\end{equation*}
 consists of cochain maps
 $$
    i\colon \HH\to \E, \quad \pi\colon \E\to \HH
 $$
 and a linear map $\hat P$ such that $\pi\circ i=1_{\HH}$ is the identity map on $\HH$ and
 $$
    1_{\E}=i\circ \pi+ \bbracket{Q, \hat P}.
 $$
We require that
 $$
  \hat P\circ \hat P=0, \quad \pi \circ \hat P=0, \quad \hat P\circ i=0.
 $$
 \end{defn}

A contraction data can be constructed by Hodge theory as follows. Suppose we have a linear operator
$$
Q^{\mr{GF}}: \E \to \E
$$ such that $D=[Q, Q^{\mr{GF}}]$
 is a generalized Laplacian. Such $Q^{\mr{GF}}$ is usually called a gauge fixing operator. Let us consider the space
 $$
    \mathbb{H}:= \{\alpha \in  \E \mid Q\alpha=0,\; Q^{\mr{GF}}\alpha=0\}
 $$
of harmonic elements. Hodge theory implies the isomorphism
 $$
   \HH= H(\E, Q)\simeq \mathbb{H}.
 $$

 Let $h_t$ be the integral kernel of the heat operator $e^{-t [Q, Q^{\mr{GF}}]}$. Then we obtain a contraction data via
 \begin{itemize}
 \item
 $
  i\colon \HH \stackrel{\iso}{\to} \mathbb{H} \to \E
 $ is the embedding via harmonic representatives.
 \item $\pi\colon \E\to \mathbb{H}\stackrel{\iso}{\to}\HH $ is the harmonic projection.
 \item $\hat P=\int_0^\infty Q^{GF}e^{-t [Q, Q^{\mr{GF}}]}dt$.
  \end{itemize}

Note that the operator $\hat P$ has an integral kernel which is given by
$$
P= \int_0^\infty (Q^{\mr{GF}}\otimes 1)h_t dt.
$$
This is precisely the propagator of our BV theory.

 We define the following formal function on $\E$
\begin{equation}\label{eqn:tree}
W_{\mr{tree}}(I, P):=\sum_{\Gamma: \mr{trees}}\frac{1}{|\Aut(\Gamma)| } w_\Gamma(I,P)\tag{T}
\end{equation}
Here we sum over all possible connected tree graph $\Gamma$. The functional $w_\Gamma(I,P)$ is the Feynman graph integral with vertex being assigned $I$, internal edge of propagator being $P$, and external edge being the input of a function. $|\Aut(\Gamma)|$ is the size of the automorphism group of $\Gamma$.   For more details, we refer to \cite[Chapters 2 and 5]{CostelloBook}.

 \begin{defn}\label{transfer-formula} Given the contraction data $(i,\pi, P)$ as above, we define the \emph{homotopic transfer} $I_{\HH}$ of a local function $I$ as the formal function on $\HH$ defined by
 $$
    I_{\HH}=\left. W_{\mr{tree}}(I, P) \right|_{\HH} \in \OO(\HH).
 $$
 \end{defn}

 \begin{prop}\label{prop:HT} Given a solution $I$ of the classical master equation and a contraction data, its homotopic transfer $I_{\HH}$ satisfies the following classical master equation on $\HH$
 $$
   \{I_{\HH}, I_{\HH}\}_{\HH}=0.
 $$
 \end{prop}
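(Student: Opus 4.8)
The plan is to establish the identity diagrammatically, exploiting that the tree sum $W_{\mr{tree}}(I,P)$ is assembled by gluing and that the contraction data dictates precisely how $Q$ interacts with the propagator. First I would reduce the statement: since $\HH$ carries the zero differential, the induced differential $Q_\HH$ on $\OO(\HH)$ vanishes, so the classical master equation on $\HH$ has no linear term and collapses to the single condition $\fbracket{I_\HH, I_\HH}_\HH = 0$. I then unravel both this quantity and the transfer formula of Definition~\ref{transfer-formula} into Feynman graphs. By \eqref{eqn:tree}, $I_\HH$ is a sum over connected trees whose vertices carry $I$, whose internal edges carry the propagator $P$, and whose external legs are fed cohomology inputs through the inclusion $i$. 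Since the bracket $\fbracket{-,-}_\HH$ is computed with the kernel $K_{\HH}\in\Sym^2(\HH)$, the quantity $\fbracket{I_\HH, I_\HH}_\HH$ is a sum over graphs obtained by joining an external leg of one tree to an external leg of another by a single new edge labelled $K_{\HH}$; joining two trees along one edge again yields a tree, so the whole expression is organized as a sum over trees carrying a distinguished ``$K_\HH$-edge.''

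The analytic heart is a kernel identity for the propagator. Starting from the contraction-data relations $1_\E = i\circ\pi + \bbracket{Q,\hat P}$, together with $\hat P\circ\hat P=0$, $\pi\circ\hat P=0$, $\hat P\circ i=0$ and $Q(\omega)=0$, I would translate the homotopy $\bbracket{Q,\hat P}=1-i\circ\pi$ into a statement about kernels in $\Sym^2(\E)$, namely
\[
 (Q\otimes 1 + 1\otimes Q)\,P = K_0 - (i\otimes i)K_{\HH},
\]
the difference between the kernel $K_0$ of the identity and the kernel $(i\otimes i)K_{\HH}$ of the harmonic projector $i\circ\pi$, both taken with respect to $\omega$. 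Geometrically, $P$ is exactly the $Q$-homotopy trivializing everything outside the finite-dimensional cohomology, and the displayed equation is this fact paired against $\omega$.

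With this in hand, the main computation reorganizes the master-equation combination on $\HH$ into manifestly vanishing local pieces. Expanding $Q_\HH I_\HH + \tfrac12\fbracket{I_\HH, I_\HH}_\HH$ over trees, the external-leg action of $Q$ contributes $Q_\HH I_\HH = 0$, since the harmonic inputs $i(\alpha)$ are $Q$-closed because $Qi = i\,Q_\HH = 0$, while differentiating the propagator on each internal edge, via the kernel identity, splits that edge into a $K_0$-piece and an $(i\otimes i)K_{\HH}$-piece. The $(i\otimes i)K_{\HH}$-pieces reassemble precisely into the $\fbracket{-,-}_\HH$-gluings described above, whereas each $K_0$-piece completes, at the adjacent vertex, the local combination $QI + \tfrac12\fbracket{I,I}_{\mr{BV}}$, which vanishes by the classical master equation for $I$. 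Thus every term in the reorganized sum is a tree containing a localized copy of the vanishing CME combination, and the total is $0$, yielding $\fbracket{I_\HH, I_\HH}_\HH = 0$.

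I expect the principal obstacle to be the bookkeeping of Koszul signs and of the symmetry factors $1/|\Aut(\Gamma)|$: one must check that moving $Q$ past graded inputs and vertices, and matching the factor $\tfrac12$ in the bracket against the number of ways to mark an edge, makes the $K_0$-pieces combine exactly into the local master-equation combinations, rather than merely up to an overall constant. As a conceptual cross-check I would keep in reserve the observation that $(\Ol(\E)[-1], Q, \fbracket{-,-}_{\mr{BV}})$ and $(\OO(\HH)[-1], 0, \fbracket{-,-}_\HH)$ are the functional descriptions of homotopy-equivalent (cyclic $L_\infty$) structures, that a solution of the classical master equation is a Maurer--Cartan element, and that the tree formula $W_{\mr{tree}}(I,P)$ is precisely the explicit transfer formula furnished by the homological perturbation lemma, which sends Maurer--Cartan elements to Maurer--Cartan elements.
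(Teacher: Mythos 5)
Your proposal is correct, and it is precisely the standard homotopy transfer argument that the paper itself invokes: the paper's ``proof'' of Proposition \ref{prop:HT} is only a citation to \cite{CostelloBook,KS-torus}, and what you have written out --- the kernel identity $(Q\otimes 1+1\otimes Q)P=K_0-(i\otimes i)K_{\HH}$, the splitting of marked edges into $K_0$-pieces (which assemble local copies of $QI+\tfrac12\{I,I\}_{\mr{BV}}=0$) and $(i\otimes i)K_{\HH}$-pieces (which assemble $\tfrac12\{I_{\HH},I_{\HH}\}_{\HH}$) --- is exactly the content of those references. So you have supplied the details the paper delegates, by essentially the same approach.
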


\begin{proof} This is a standard homotopy transfer theorem.  See for example \cite{CostelloBook,KS-torus}.
\end{proof}

\begin{eg}[Chern--Simons theory]
		We explain the content of homotopy transfer via the example of Chern--Simons theory as discussed in Example \ref{eg:CS}. The space of fields is
		$$
		 \E:= \Omega^\bullet(M)\otimes \g [1]
		$$
and	we will adopt the same conventions there.

		Let us choose a metric on $M$.  Let
		$
		d^*\colon  \mathcal E\to \mathcal E
		$
		be the adjoint of $d$. The Laplacian is
		$$
		D:= dd^*+d^*d.
		$$
		$\mathbb H=\ker \Delta\subset \mathcal E$ is the subspace of harmonics. This leads to a contraction data
		\begin{equation*}
			\hat P\curved (\mathcal E,d)\overset{\pi}{\underset{i}\rightleftarrows} (\mathbb H,0)
		\end{equation*}
		Here $\hat P=-d^*{1\over D}$ where ${1\over D}$ is the Green's operator on forms. $P$ is precisely the propagator of Chern--Simons theory. The homotopy transfer defines an $L_\infty$-structure on $\mathbb H$. The tree formula in Definition \ref{transfer-formula} gives the tree level Feynman diagrams for Chern--Simons theory, and the transferred $L_\infty$-structure is a way to organize the structure of effective theory on zero modes.
\end{eg}

 \subsection{Compactification}\label{sec:comp}

In this section, we generalize the homotopy transfer method and consider a situation where one has a classical BV theory on a product manifold $M=X\times Y$. We discuss its compactification on $X$, that is, what happens after performing homotopic transfer along $X$. This models the physics process of integrating out massive modes on $X$ to get an effective theory on $Y$. When $Y=\mr{pt}$ is a point, this reduces to the discussion in the previous section.

Let us assume that we have free classical BV theory $(E_X, Q_X)$ on $X$ and $(E_Y, Q_Y)$ on $Y$. For the elliptic complexes $(\E_X,Q_X)$ and $(\E_Y,Q_Y)$, let us consider the  tensor product
$$
  \E_{X\times Y}=\E_X\otimes \E_Y:=\Gamma(X\times Y, E_X\boxtimes E_Y)
$$
with differential $Q_{X\times Y}=Q_X\otimes 1+1\otimes Q_Y$. In the following, we will just write $Q_X$ for $Q_X\otimes 1$ and $Q_Y$ for $1\otimes Q_Y$ for simplicity.

 Assume we have a local $(-1)$-shifted symplectic pairing $\omega$ on $\E_{X\times Y}$ such that it is compatible with both $Q_X$ and $Q_Y$. Then the triple
$$
   (  \E_{X\times Y}, Q_{X\times Y}, \omega)
$$
defines a free BV theory on $X\times Y$.

Let us denote as before
$$
\HH_X= H^\bullet(\E_X, Q_X).
$$
Since $\omega$ is compatible with $Q_X$, it descends to define a symplectic pairing on $\HH_X\otimes \E_Y$, denoted by $\omega_{Y}$.  It is easy to see that the triple
$$
\bracket{\HH_X\otimes \E_Y, Q_Y, \omega_Y}
$$
defines a free BV theory on $Y$. Our next goal is to figure out an interaction that arises naturally from an interaction on $ \E_{X\times Y}$.

 \begin{defn}  A \emph{contraction data} $(i,\pi, \hat P)$ of $(\E_{X\times Y}, Q_{X\times Y})$ relative to $\pi_Y\colon X\times Y \to Y$ consists of cochain maps
 $$
    i\colon \HH_X\otimes \E_Y \to \E_{X\times Y}, \quad \pi\colon \E_{X\times Y}\to \HH_X\otimes \E_Y
 $$
 and a linear operator $\hat P$ on $\E_{X\times Y}$ such that $\pi\circ i=1_{\HH_X\otimes \E_Y}$ and
 $$
    1_{\E_{X\times Y}}=i\circ \pi+ \bbracket{Q_X, \hat P}.
 $$
We require that
 $$
  \hat P\circ \hat P=0, \quad \pi \circ \hat P=0, \quad \hat P\circ i=0.
 $$
 \end{defn}
When $Y$ is a point and $\E_Y=  \R$, we recover Definition \ref{defn:HT}.

Assume $Q^{\mr{GF}}_X$ is a gauge fixing operator on $X$ such that $[Q_X, Q_X^{\mr{GF}}]$ is a generalized Laplacian on $X$. Let $h_t^X$ be the heat kernel of the operator $e^{-t[Q_X,Q_X^{\mr{GF}} ]}$. Let $\HH_X$ be the harmonics  of $\E_X$. Then one can find a contraction data $(i, \pi, P)$ relative to $\pi_Y$ where
 \begin{itemize}
\item $    i\colon \HH_X\otimes \E_Y\to \E_{X\times Y}$ is the harmonic embedding (along $X$).
\item$ \pi\colon\E_{X\times Y}\to \HH_X\otimes \E_Y$ is the harmonic projection (along $X$).
\item $\hat P =\hat P_X \otimes 1_{Y}$ where $\hat P_X= \int_0^\infty Q_X^{\mr{GF}} e^{-t[Q_X,Q_X^{\mr{GF}} ]} dt$ acts on $\E_X$ factor and $1_Y$ is the identity map on $\E_Y$ factor. The integral kernel of $\hat P$ on $X\times Y$ is the integral kernel of $\hat P_X$ on $X$ tensoring with the delta-function on $Y$.
\end{itemize}

 \begin{thm}\label{thm-push} Let $I\in \Ol(\E_{X\times Y})$ satisfy the classical master equation for  $   (  \E_{X\times Y}, Q_{X\times Y}, \omega)
$. Let $(i, \pi, \hat P)$ be a contraction data relative to $\pi_Y\colon X\times Y \to Y$ defined via the gauge fixing operator $Q_X^{GF}$ above, and  $P$ be the integral kernel of $\hat P$. Define
 $$
   \pi_{Y*} I := \left. W_{\mr{tree}}(I, P) \right |_{\HH_X\otimes \E_Y}.
 $$
 Here $W_{\mr{tree}}(I, P)$ is the sum of Feynman tree diagrams \eqref{eqn:tree}. $|_{\HH_X\otimes \E_Y}$ means restricting to ${\HH_X\otimes \E_Y}$. Then $\pi_{Y*}I\in \Ol(\HH_X\otimes \E_Y)$ is a local functional on $Y$ and satisfies the classical master equation for the BV theory $\bracket{\HH_X\otimes \E_Y, Q_Y, \omega_Y}
$.

 \end{thm}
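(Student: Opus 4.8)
The plan is to regard this as the homotopy transfer theorem (Proposition \ref{prop:HT}) performed \emph{only along $X$}, with the $Y$-directions treated as a base over which everything is equivariant. What makes this possible is that the contraction data factorizes as $i=i_X\otimes 1_Y$, $\pi=\pi_X\otimes 1_Y$, and $\hat P=\hat P_X\otimes 1_Y$, so all three commute with $Q_Y=1\otimes Q_Y$, while the contraction identity $1=i\circ\pi+[Q_X,\hat P]$ involves $Q_X$ alone. Thus the whole transfer takes place in the category of $Q_Y$-complexes, and $Q_Y$ is simply carried along. I would split the statement into two parts: (i) that $\pi_{Y*}I$ is local on $Y$, so that the bracket $\{-,-\}_{\omega_Y}$ from Section \ref{sec:free-BV} is even defined on it; and (ii) the transferred classical master equation.

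For locality (i), I would use that the propagator kernel is $P=P_X\otimes\delta_Y$, the integral kernel of $\hat P_X$ on $X$ tensored with the delta-function (diagonal) on $Y$. In each connected tree contributing to $W_{\mr{tree}}(I,P)$, every internal edge inserts $P$, whose $\delta_Y$ factor identifies the $Y$-coordinates of its two endpoints. Since $I$ is local on $X\times Y$, each vertex is a finite sum of terms $\int_{X\times Y}(D_1\phi)\cdots(D_k\phi)$; the delta-functions collapse all internal $Y$-integrations onto the diagonal, leaving the whole tree integrated over a single copy of $Y$ with only differential operators acting in the $Y$-variable, while the $X$-variables are integrated out against the smoothing kernels $P_X$ and projected by $\pi_X$ onto the finite-dimensional $\HH_X$. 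Each tree therefore yields a local functional on $Y$ with coefficients in functions of $\HH_X$, giving $\pi_{Y*}I\in\Ol(\HH_X\otimes\E_Y)$.

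For the master equation (ii), I would run the algebraic mechanism underlying Proposition \ref{prop:HT} in a $Q_Y$-equivariant manner. The tree sum is built by contracting copies of $I$ along $P$, and the only way the differential interacts with a propagator is through the kernel-level identity $[Q_X,\hat P]=1-i\circ\pi$ together with $[Q_Y,\hat P]=0$. This says that acting by $Q_X+Q_Y$ on an internal edge replaces it either by the full BV kernel $K_0$ (from the $1$) or by minus the kernel of $i\circ\pi$, whose pullback to $\HH_X\otimes\E_Y$ is exactly the transferred kernel $\omega_Y^{-1}$. Applying $Q_X+Q_Y$ to $W_{\mr{tree}}(I,P)$ and using the hypothesis $(Q_X+Q_Y)I+\tfrac12\{I,I\}_{\mr{BV}}=0$ at every vertex, the $K_0$-terms cancel the vertex brackets, the $i\circ\pi$-terms reassemble into the transferred bracket on external legs, and, since $Q_Y$ commutes with $\hat P$, it survives as the residual differential on those legs. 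Restricting to $\HH_X\otimes\E_Y$ and collecting terms then yields precisely $Q_Y(\pi_{Y*}I)+\tfrac12\{\pi_{Y*}I,\pi_{Y*}I\}_{\omega_Y}=0$.

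I expect the genuinely new work, beyond quoting Proposition \ref{prop:HT}, to lie in locality (i): because the target $\HH_X\otimes\E_Y$ is infinite-dimensional, the transferred bracket $\{-,-\}_{\omega_Y}$ is a priori defined only on local functionals, so without (i) the statement (ii) would not even be well-defined. The delicate step is to check that the $\delta_Y$-collapse is compatible with the differential operators carried by the local vertices --- that the integrations by parts in the $Y$-variable do not destroy locality --- and that each tree integral converges once the $X$-integrations are performed against $P_X$ and projected onto the finite-dimensional harmonics $\HH_X$. Once locality is secured, (ii) is formally identical to the finite-dimensional transfer, the sole difference being that the surviving differential is $Q_Y$ rather than zero.
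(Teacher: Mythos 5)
Your proposal is correct and follows essentially the same route as the paper: the paper's proof simply states that the argument is completely parallel to Proposition \ref{prop:HT}, with locality on $Y$ following from the fact that the propagator $P$ is a $\delta$-function distribution along $Y\times Y$ --- which is exactly your $Q_Y$-equivariant transfer plus the $\delta_Y$-collapse argument, spelled out in more detail. Your identification of locality as the genuinely new ingredient beyond Proposition \ref{prop:HT} matches the one point the paper's proof singles out.
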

 \begin{proof} The proof is completely parallel to Proposition \ref{prop:HT}.  The locality on $Y$ comes from the fact that the propagator $P$ is a $\delta$-function distribution along $Y\times Y$. \end{proof}

Therefore we obtain a classical interacting BV theory  on $Y$.  The functional $\pi_{Y*}I$ is called in physics the effective theory of $I$ on $Y$ obtained by compactification on $X$. A contraction data relative to $\pi_Y\colon X\times Y \to Y$ gives
$$
\text{CME}  (  \E_{X\times Y}, Q_{X\times Y}, \omega) \to \text{CME} \bracket{\HH_X\otimes \E_Y, Q_Y, \omega_Y}
$$
which plays the role of integrating out massive modes along $X$ (at the classical level).

\section{Dispersionless Integrable Hierarchy}\label{sec:3}
In this section we explain the connection between dispersionless integrable hierarchy in 2d topological field theory and the Kodaira--Spencer gravity (BCOV theory). The infinitely many commutating Hamiltonians arise naturally from the infinite abelian symmetries of Kodaira--Spencer gravity (BCOV theory).

 \subsection{Dispersionless Integrable Hierarchy via Maurer--Cartan Equation}\label{subsection:Dispersionless hierarchy}
We collect some basics on 2d topological field theories that will be used in this paper. We refer to \cite{Dubrovin-book} for a comprehensive review.

Let $H$ be a finite-dimensional graded vector space, which is the small phase space of a 2d topological field theory. The big phase space is $H\llb t\rrb$ where $t$ is a formal variable of degree $2$ representing the gravitational descendant. We use $\abracket{-}_0$ to denote the genus zero correlation functions, which is a graded symmetric function on  $H\llb t\rrb$
  $$
 \abracket{-}_0\colon  \Sym^\bullet (H\llb t\rrb)\to \C.
 $$
For example in the A-model, $H$ could be the de Rham cohomology of $X$, and $\abracket{-}_0$ be the genus zero Gromov--Witten invariants; for another example in the B-model, $H$ could be the Jacobian ring of a holomorphic function and $\abracket{-}_0$ be the genus zero Landau--Ginzburg invariants.
Restricting $\abracket{-}_0$ to the small phase space gives $H$ a structure of graded Frobenius manifold. For simplicity, we assume $H$ is purely even in the following discussion.

 Let us choose a basis $\{\OO_\alpha\}_{\alpha}$ of $H$, where we identify $\OO_1$ as the identity element (or so-called puncture operator) and write $P:=\OO_1$. They give a topological basis $\{\OO_\alpha^{(k)}\}_{k\geq 0,\alpha}$ of $H\llb t\rrb$ where $\OO_\alpha^{(k)}:= t^k \OO_\alpha$ represents the $k$-th gravitational descendant of $\OO_\alpha$.  The  matrix
 $$
 g_{\alpha\beta}:=\abracket{\OO_\alpha \OO_\beta P}_0
 $$
 gives a non-degenerate inner product on $H$. Let $g^{\alpha\beta}$ be its inverse matrix. $g$ will be used to raise and lower the indices. For example,
 $
 \OO^{(k)\alpha}:=\sum_{\beta}g^{\alpha\beta} \OO^{(k)}_\beta
 $.

 Let us write $b^\alpha_k$ for the linear coordinates on $H\llb t\rrb$ corresponding to the basis $\{\OO_\alpha^{(k)}\}_{k\geq 0,\alpha}$. We use the following symbol
 $$
 \aabracket{\OO^{(k_1)}_{\alpha_1} \cdots\OO^{(k_n)}_{\alpha_n}}_0(\bb):= \abracket{\OO^{(k_1)}_{\alpha_1} \cdots\OO^{(k_n)}_{\alpha_n} e^{ \sum_{k,\alpha} b^\alpha_k \OO_\alpha^{(k)}}}_0\quad \quad \bb=\{b^\alpha_k\}
 $$
 which is a formal power series in $b^\alpha_k$'s. In particular, $\F_0:=\aabracket{-}_0$ is the genus 0 partition function. Then
 $$
  \aabracket{\OO^{(k_1)}_{\alpha_1} \cdots\OO^{(k_n)}_{\alpha_n}}_0=\pa_{k_1,\alpha_1}\cdots \pa_{k_n,\alpha_n}\F_0, \quad \text{where}\quad \pa_{k,\alpha}:={\pa\over \pa b^\alpha_k}.
 $$
Because $\abracket{-}_0$ requires at least three insertions, $\F_0$ starts with cubic terms.

 The potential function of the Frobenius manifold is given by restricting $\F_0$ to the small phase space, that is, $\F_0|_{b^\bullet_{>0}=0}$, which solves the WDVV equation. The following topological recursion relation also holds \cite{Witten-gravity}
 \begin{equation}\label{TRR}
 \aabracket{\OO^{(i+1)}_\alpha \OO^{(j)}_\beta \OO^{(k)}_\gamma}_0=\sum_{\sigma}\aabracket{\OO^{(i)}_\alpha \OO^{(0)\sigma}}_0 \aabracket{\OO^{(0)}_\sigma\OO^{(j)}_\beta \OO^{(k)}_\gamma}_0.
 \tag{TRR}
 \end{equation}

 From this data, we can associate a classical dispersionless integrable hierarchy as follows \cite{DW,Witten-moduli,Dubrovin-integrable}. Let us promote $b^\alpha_k$ to be a function $b^\alpha_k(z)$ depending on a variable $z$. Consider the space of local functionals in $b^\alpha_k(z)$'s of the following type
 $$
   \oint dz \ \L(b^\alpha_k, \pa_zb^\alpha_k,\pa_z^2b^\alpha_k,\cdots)
 $$
where $\L$ is a  formal power series in $\pa_z^mb^\alpha_k(z)$'s. Alternately, such space can be defined by the formal differential ring generated by $b^\alpha_k(z)$'s modulo total derivatives.

 We introduce a Poisson bracket by
 \begin{equation*}
     \{b_0^\alpha(z), b_0^\beta(w)\}=g^{\alpha\beta} \pa_z \delta(z-w) 
 \end{equation*}
where $\delta(z-w)$ is the $\delta$-function. We ask it to be only nontrivial for $b^\alpha_k$'s when $k=0$. Then it induces a Lie bracket on the above space of local functionals, which we still denote by $\{-,-\}$. Concretely, given two local functionals $S_i=\oint dz\  \mc L_i$ as above, their bracket is 
\begin{equation}\label{PSBracket}
\{S_1, S_2\}:=\sum_{\alpha, \beta, k,m}g^{\alpha \beta}\oint dz \bracket{\bracket{-{\pa\over \pa z}}^k {\pa \mc L_1\over \pa (\pa_z^k b_0^\alpha)}}{\pa\over \pa z} \bracket{\bracket{-{\pa\over \pa z}}^m{\pa \mc L_2 \over \pa (\pa_z^mb_0^\beta)}}. \tag{P}
\end{equation}
One can show that $\{-,-\}$ is well-defined on the differential ring modulo total derivatives. For example, using 
\begin{equation*}
\pa_z\frac{\pa}{\pa(\pa_z^kb_0^{\alpha})}=\frac{\pa}{\pa(\pa_z^kb_0^{\alpha})}\pa_z-\frac{\pa}{\pa(\pa_z^{k-1}b_0^{\alpha})}
\end{equation*}
we can check it is well-defined as follows (this is basically a D-module manipulation)
\begin{align*}
&\fbracket{\oint dz\ \pa_z\L_1, \oint dz\ \L_2}\\
=~&\sum_{\alpha, \beta, k,m}g^{\alpha \beta}\oint dz\ \bracket{\bracket{-{\pa\over \pa z}}^k {\pa (\pa_z\L_1)\over \pa (\pa_z^k b_0^\alpha)}}{\pa\over \pa z} \bracket{\bracket{-{\pa\over \pa z}}^m{\pa \mc L_2 \over \pa (\pa_z^mb_0^\beta)}}\\
=~&-\sum_{\alpha, \beta, k,m}g^{\alpha \beta}\oint dz\ \bracket{\bracket{-{\pa\over \pa z}}^{k+1} {\pa \L_1\over \pa (\pa_z^k b_0^\alpha)}}{\pa\over \pa z} \bracket{\bracket{-{\pa\over \pa z}}^m{\pa \mc L_2 \over \pa (\pa_z^mb_0^\beta)}}\\
&+\sum_{\alpha, \beta, k,m}g^{\alpha \beta}\oint dz\ \bracket{\bracket{-{\pa\over \pa z}}^k {\pa \L_1\over \pa (\pa_z^{k-1} b_0^\alpha)}}{\pa\over \pa z} \bracket{\bracket{-{\pa\over \pa z}}^m{\pa \mc L_2 \over \pa (\pa_z^mb_0^\beta)}}=0.
\end{align*}
It can be further checked directly that $\{-,-\}$ gives a Lie bracket.


Let $G_{k,\alpha}$ be the restriction of 1-point functions to the small phase space
$$
G_{k,\alpha}(b_0^\bullet)= \left. \aabracket{\OO^{(k)}_\alpha}_0(\bb)\right|_{b^\bullet_{>0}=0}.
$$
The key observation is the following proposition (see \cite{Witten-moduli,Dubrovin-integrable}).

\begin{prop} \label{prop:dispersionless hierarchy}
The local functionals
 $\{
\oint dz\ G_{k,\alpha}(b_0^\bullet)\}
$
commute with each other:
 $$
 \fbracket{\oint dz\ G_{k,\alpha}(b_0^\bullet), \oint dz\ G_{m,\beta}(b_0^\bullet)}=0, \quad \forall \alpha, \beta, \quad \text{and}\quad \forall k, m\geq 0.
 $$
 \end{prop}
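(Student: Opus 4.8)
The plan is to reduce the vanishing of the bracket to the closedness of a $1$-form on the small phase space, and then to deduce that closedness from the topological recursion relation \eqref{TRR} together with the commutativity, associativity and metric-invariance of the Frobenius algebra on $H$.

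First I would simplify \eqref{PSBracket}. Each density $G_{k,\alpha}(b_0^\bullet)$ depends only on the undifferentiated fields $b_0^\gamma(z)$, so $\partial \mathcal{L}/\partial(\partial_z^j b_0^\gamma)$ vanishes for $j\geq 1$ and only the $j=0$ terms survive. Writing $\partial_\mu:=\partial/\partial b_0^\mu$, one is left with
$$\fbracket{\oint dz\, G_{k,\alpha},\ \oint dz\, G_{m,\beta}} = \sum_{\mu,\nu} g^{\mu\nu}\oint dz\,\bracket{\partial_\mu G_{k,\alpha}}\,\partial_z\bracket{\partial_\nu G_{m,\beta}}.$$
Expanding $\partial_z\bracket{\partial_\nu G_{m,\beta}}=\sum_\epsilon\bracket{\partial_\nu\partial_\epsilon G_{m,\beta}}\partial_z b_0^\epsilon$, the integrand becomes $\Omega_\epsilon\,\partial_z b_0^\epsilon$ with
$$\Omega_\epsilon:=\sum_{\mu,\nu}g^{\mu\nu}\bracket{\partial_\mu G_{k,\alpha}}\bracket{\partial_\nu\partial_\epsilon G_{m,\beta}}.$$
By the standard fact that such a density is a total $z$-derivative (hence trivial in the differential ring modulo total derivatives) precisely when the $1$-form $\sum_\epsilon\Omega_\epsilon\,d b_0^\epsilon$ is closed, it suffices to prove the symmetry $\partial_\zeta\Omega_\epsilon=\partial_\epsilon\Omega_\zeta$.

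The key input is a recursion for the second derivatives of the densities. Since $\partial_\mu\partial_\nu G_{k,\alpha}$ is the restriction to the small phase space of $\aabracket{\OO^{(k)}_\alpha\OO^{(0)}_\mu\OO^{(0)}_\nu}_0$, applying \eqref{TRR} to the descendant $\OO^{(k)}_\alpha$ (so $i=k-1$) and restricting gives, for $k\geq 1$,
$$\partial_\mu\partial_\nu G_{k,\alpha}=\sum_{\sigma} c_{\mu\nu}{}^{\sigma}\,\partial_\sigma G_{k-1,\alpha},\qquad c_{\mu\nu}{}^{\sigma}:=\sum_\rho g^{\sigma\rho}\,\partial_\rho\partial_\mu\partial_\nu\bracket{\left.\F_0\right|_{b^\bullet_{>0}=0}},$$
where $c_{\mu\nu}{}^{\sigma}$ are the structure constants of the Frobenius algebra on $H$. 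Declaring $\partial_\sigma G_{-1,\alpha}:=g_{\sigma\alpha}$ makes the same recursion valid at $k=0$, since then the right-hand side equals $\partial_\mu\partial_\nu\partial_\alpha\F_0|_{b^\bullet_{>0}=0}=\partial_\mu\partial_\nu G_{0,\alpha}$; this handles the base case uniformly.

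With the recursion in hand, I would expand $\partial_\zeta\Omega_\epsilon$ by Leibniz. The term in which $\partial_\zeta$ lands on $\partial_\nu\partial_\epsilon G_{m,\beta}$ involves $\partial_\zeta\partial_\nu\partial_\epsilon G_{m,\beta}$, which is symmetric in $\zeta\leftrightarrow\epsilon$, so it cancels in $\partial_\zeta\Omega_\epsilon-\partial_\epsilon\Omega_\zeta$. In the remaining term I substitute the recursion on both factors and obtain
$$\sum_{\mu,\nu}g^{\mu\nu}\bracket{\partial_\zeta\partial_\mu G_{k,\alpha}}\bracket{\partial_\nu\partial_\epsilon G_{m,\beta}} = g\bracket{\hat a\cdot e_\zeta,\ \hat b\cdot e_\epsilon},$$
where $\hat a,\hat b$ are the vectors with components $g^{\rho\sigma}\partial_\sigma G_{k-1,\alpha}$ and $g^{\lambda\tau}\partial_\tau G_{m-1,\beta}$, the dot is the Frobenius product, and $g(-,-)$ is the invariant pairing. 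By invariance $g(xy,w)=g(x,yw)$ and commutativity this equals $\phi(\hat a\,\hat b\,e_\zeta\,e_\epsilon)$ with $\phi$ the counit, which is symmetric in $\zeta\leftrightarrow\epsilon$; hence it too cancels and $\Omega$ is closed. The main obstacle is precisely this last step: correctly extracting the recursion from \eqref{TRR} (tracking which insertion carries the descendant and checking that restriction to the small phase space produces exactly the structure constants), and then recognizing the surviving quadratic expression as a metric-invariant product in the Frobenius algebra, so that commutativity forces the required $\zeta\leftrightarrow\epsilon$ symmetry.
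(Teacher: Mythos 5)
Your proof is correct, but it takes a genuinely different route from the paper's. The paper never computes the bracket $\fbracket{\oint dz\ G_{k,\alpha},\oint dz\ G_{m,\beta}}$ directly: it first enlarges the fields by fermions $\eta^\alpha_k$, packages all one-point functions into the single functional $I=\sum_{k,\alpha}\oint dz\ \eta^\alpha_k\aabracket{\OO^{(k)}_\alpha}_0(\bb)$, proves in Theorem \ref{thm-MC} that the Maurer--Cartan equation $\delta I+\tfrac{1}{2}\{I,I\}=0$ is equivalent to \eqref{TRR}, and then restricts to the stationary sector $b^\alpha_{k>0}=0$, $\eta^\alpha_k=\text{constant}$, where $\delta=0$; there $\{I_S,I_S\}=0$ unpacks, thanks to the fermionic variables, into all the pairwise commutators simultaneously. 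You instead verify commutativity by hand, in the style of the classical argument of Dubrovin and Witten (the references the paper itself cites for this proposition): you reduce the vanishing of the bracket to closedness of the $1$-form $\sum_\epsilon\Omega_\epsilon\, db_0^\epsilon$ (a correct criterion, since a density $\Omega_\epsilon(b_0)\,\pa_z b_0^\epsilon$ is a total $z$-derivative iff $\Omega$ is exact, and closed implies exact for formal power series), extract from \eqref{TRR} the recursion $\pa_\mu\pa_\nu G_{k,\alpha}=\sum_\sigma c_{\mu\nu}{}^\sigma\,\pa_\sigma G_{k-1,\alpha}$ with the right convention at $k=0$, and finish with the Frobenius-algebra identities; I checked the index gymnastics in your final step and the identification with $g(\hat a\cdot e_\zeta,\hat b\cdot e_\epsilon)$ is right. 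The trade-offs are worth noting. Your proof is self-contained and elementary, requires no fermionic enlargement, and exposes the useful intermediate fact that second derivatives of the Hamiltonian densities are structure constants contracted against lower densities; on the other hand it consumes both TRR and WDVV (associativity of the product $c_{\mu\nu}{}^\sigma$ is needed to rearrange $\phi(\hat a\, e_\zeta\, \hat b\, e_\epsilon)=\phi(\hat a\hat b\, e_\zeta e_\epsilon)$), whereas the paper's route runs on TRR alone. Finally, the paper's packaging is not mere economy: casting commutativity as a Maurer--Cartan/stationary-sector statement is precisely what is later reinterpreted as the classical BV master equation of $H$-valued BCOV theory, which is the conceptual point of the paper, and which your direct computation bypasses.
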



Such infinitely many commuting Hamiltonians  $\{ \oint dz\ G_{k,\alpha}(b_0^\bullet)\}
 $ generate the classical dispersionless integrable hierarchy. Below we will interpret and derive this proposition from a completely different perspective. In the next sections, we will explain the origin of our construction from the viewpoint of Kodaira--Spencer gravity (BCOV theory).

\begin{eg}[Dispersionless KdV] Consider $H=\C$ and $\abracket{-}_0$ comes from pure gravity. It is known that
\begin{equation*}
\abracket{P^{(k_1)}\cdots P^{(k_n)}}_0=\int_{\overline{M}_{0,n}}\psi_1^{k_1}\cdots\psi_n^{k_n}=\binom{n-3}{~k_1,\cdots,k_n~}.
\end{equation*}
Then we have
$$G_k(b_0)=\abracket{P^{(k)}e^{b_0P}}_0=\frac{1}{(k+2)!}b_0^{k+2}\abracket{P^{(k)}\overbrace{P\ \cdots\ P}^{k+2}}_0=\frac{1}{(k+2)!}b_0^{k+2}.$$
$\left\{\oint dz\ G_k(b_0)\right\}_{k\geq0}$ give the commuting Hamiltonians of the dispersionless KdV hierarchy.
Indeed, by the definition of the bracket,
$$\fbracket{\oint dz\ G_k(b_0), \oint dz\ G_{m}(b_0)}=\frac{1}{(k+1)!m!}\ \oint dz\ b_0^{k+1+m}\pa_zb_0=0.$$
\end{eg}

We enlarge the above bosonic fields $b^\alpha_k(z)$'s by introducing additional fermions $\eta^\alpha_k(z)$'s. Let $\Xi $ denote the space of local functionals of $b^\alpha_k(z), \eta^{\alpha}_k(z)$'s, i.e. the differential ring generated by $b^\alpha_k(z), \eta^{\alpha}_k(z)$ modulo total derivatives. An element of $\Xi $ can be expressed by
$$
\oint dz\ \mathcal L(b^\alpha_k, \eta^{\alpha}_k,\pa_z b^\alpha_k, \pa_z \eta^{\alpha}_k,\cdots)
$$
where $\mathcal L$ is a formal power series in derivatives of $b^\alpha_k(z), \eta^{\alpha}_k(z)$'s. Note that $\eta^\alpha_k$'s anti-commute with each other, and $\Xi$ is a $\Z/2\Z$-graded space. The above defined Poisson bracket on $b^\alpha_0$'s extends to define a bracket on $\Xi $, by the same formula \eqref{PSBracket}. 

We introduce a differential $\delta$ by
$$
\delta b^\alpha_k=\pa_z  \eta^{\alpha}_{k-1} , \quad \delta \eta^{\alpha}_k=0.
$$
Since $k\geq 0$, the above formula reads $\delta b^\alpha_0=0$ when $k=0$.

\begin{prop}\label{prop-dgla}The triple $(\Xi , \delta, \{-,-\})$ forms a differential $\Z/2\Z$-graded Lie algebra.
\end{prop}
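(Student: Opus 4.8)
The plan is to verify the three conditions defining a differential $\Z/2\Z$-graded Lie algebra: that $\delta$ is a well-defined odd derivation of $\Xi$ squaring to zero; that $\{-,-\}$ is an even, graded antisymmetric bracket obeying the graded Jacobi identity; and that $\delta$ is a graded derivation of $\{-,-\}$. The bracket is even because formula \eqref{PSBracket} differentiates only the even variables $b^\alpha_0$, while $\delta$ is odd since it trades a boson for a fermion; the two parities are compatible with the Leibniz rule to be proved. The first two conditions are graded refinements of facts already recorded in the bosonic case, and the genuinely new point is the compatibility of $\delta$ with $\{-,-\}$.

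First I would set up $\delta$ on the jet description, where the independent coordinates are $\pa_z^j b^\alpha_k$ and $\pa_z^j\eta^\alpha_k$. Here $\delta$ is the odd derivation determined by $\delta(\pa_z^j b^\alpha_k)=\pa_z^{j+1}\eta^\alpha_{k-1}$ for $k\ge 1$ (and $0$ for $k=0$) and $\delta(\pa_z^j\eta^\alpha_k)=0$. A check on generators gives $\delta\pa_z=\pa_z\delta$, so $\delta$ carries total derivatives to total derivatives and descends to $\Xi$. Since $\delta$ maps every generator into the $\eta$-subalgebra, on which $\delta$ vanishes, one gets $\delta^2=0$ on generators and hence on all of $\Xi$.

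For the Leibniz compatibility I would first rewrite the bracket through the variational (Euler--Lagrange) derivative $\frac{\delta}{\delta b^\alpha_0}:=\sum_k(-\pa_z)^k\frac{\pa}{\pa(\pa_z^k b^\alpha_0)}$ (not to be confused with the differential $\delta$). Summing formula \eqref{PSBracket} over $k$ and $m$ puts it in the form $\{S_1,S_2\}=\sum_{\alpha,\beta}g^{\alpha\beta}\oint dz\,\frac{\delta S_1}{\delta b^\alpha_0}\,\pa_z\,\frac{\delta S_2}{\delta b^\beta_0}$. The one essential lemma is that $\delta$ graded-commutes with each operator $\frac{\pa}{\pa(\pa_z^k b^\alpha_0)}$: acting on generators, the graded commutator annihilates all $\pa_z^j b^\beta_l$ and $\pa_z^j\eta^\beta_l$ because $\delta b^\alpha_0=0$ and the image of $\delta$ is independent of $b^\alpha_0$. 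Together with $\delta\pa_z=\pa_z\delta$ this shows $\delta$ commutes with each $\frac{\delta}{\delta b^\alpha_0}$. Applying the odd derivation $\delta$ to the displayed form of the bracket, using this commuting property and the $\delta$-invariance of $\oint dz$, then yields exactly $\delta\{S_1,S_2\}=\{\delta S_1,S_2\}+(-1)^{|S_1|}\{S_1,\delta S_2\}$.

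Finally, graded antisymmetry follows from a single integration by parts: $g^{\alpha\beta}$ is symmetric and $\pa_z$ is skew-adjoint under $\oint dz$, and commuting the two factors past one another produces the Koszul sign $(-1)^{|S_1||S_2|}$, giving $\{S_1,S_2\}=-(-1)^{|S_1||S_2|}\{S_2,S_1\}$. The graded Jacobi identity is automatic because the Hamiltonian operator $g^{\alpha\beta}\pa_z$ has constant, field-independent coefficients, so its Schouten bracket vanishes; this is the graded version of the direct ungraded check already indicated in the text. I expect the main obstacle to be purely the bookkeeping of Koszul signs---verifying that the signs generated when the odd $\delta$ and the two factors of the bracket are permuted assemble precisely into those appearing in the antisymmetry, Jacobi, and Leibniz identities. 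Everything else is a formal consequence of the commuting lemma and of the locality of the bracket modulo total derivatives.
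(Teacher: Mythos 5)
Your proposal is correct and takes essentially the same approach as the paper: the paper also reduces the statement to the compatibility of $\delta$ with $\{-,-\}$, proved from the two facts that $\delta$ commutes with $\pa_z$ and that $\delta$ annihilates the $b_0^\alpha$'s while producing only $\eta$'s. Your packaging of this as a graded-commutator lemma $[\delta,\pa/\pa(\pa_z^k b_0^\alpha)]=0$ applied to variational derivatives is just a cleaner organization of the paper's explicit computation on factored densities $\L_i=f_i\,g_i$; the mathematical content is identical.
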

\begin{rmk} It is in fact $\Z$-graded, but we won't need this fact in this paper. 
\end{rmk}
\begin{proof} 
We only need to prove that the differential $\delta$ is compatible with the bracket $\{-,-\}$. It suffices to consider the densities of the form
$$\L_i(b^\alpha_k,\eta^{\alpha}_k, \pa_z b^\alpha_k, \pa_z\eta^{\alpha}_k,\cdots)=f_i(b^\alpha_0, \pa_z b^\alpha_0,\cdots)g_i(b^\alpha_{k>0},\eta^{\alpha}_k, \pa_z b^\alpha_{k>0},\pa_z\eta^{\alpha}_k,\cdots)\quad i=1,2.$$
The compatibility follows from the fact that $\delta$ commutes with $\pa_z$: 
\begin{align*}
&\delta\left\{\oint\ dz\ \L_1(b^\alpha_k,\eta^{\alpha}_k, \pa_z b^\alpha_k, \pa_z\eta^{\alpha}_k,\cdots),\oint dz\ \L_2(b^\alpha_k,\eta^{\alpha}_k, \pa_z b^\alpha_k, \pa_z\eta^{\alpha}_k,\cdots)\right\}\\
=~&\delta\oint dz\ \sum_{\alpha, \beta, k,m}\left(-\frac{\pa}{\pa z}\right)^k\left(\frac{\pa f_1}{\pa(\pa^kb_0^{\alpha})}g_1\right)g^{\alpha\beta}\pa_z\left(-\frac{\pa}{\pa z}\right)^m\left(\frac{\pa f_2}{\pa(\pa^mb_0^{\alpha})}g_2\right)\\
=~&\oint dz\ \sum_{\alpha, \beta, k,m}\left(-\frac{\pa}{\pa z}\right)^k\left(\frac{\pa f_1}{\pa(\pa^kb_0^{\alpha})}\delta g_1\right)g^{\alpha\beta}\pa_z\left(-\frac{\pa}{\pa z}\right)^m\left(\frac{\pa f_2}{\pa(\pa^mb_0^{\alpha})}g_2\right)\\
&+(-1)^{|\L_1|}\oint dz\ \sum_{\alpha, \beta, k,m}\left(-\frac{\pa}{\pa z}\right)^k\left(\frac{\pa f_1}{\pa(\pa^kb_0^{\alpha})}g_1\right)g^{\alpha\beta}\pa_z\left(-\frac{\pa}{\pa z}\right)^m\left(\frac{\pa f_2}{\pa(\pa^mb_0^{\beta})}\delta g_2\right)\\
=~&\left\{\oint\ dz\ \delta \L_1, \oint dz\ \L_2\right\}+(-1)^{|\L_1|}\left\{\oint\ dz\ \L_1, \oint dz\ \delta\L_2\right\}.\qedhere
\end{align*} 
\end{proof}

Define the following functional $I\in \Xi$ by
\[I=\sum_{k,\alpha}\oint dz \  \eta^\alpha_k(z) \pa_{k,\alpha}\F_0(\bb(z)), \quad \bb=\{b^\alpha_k\}.\]
Equivalently, we can write
\[I=\sum_{k,\alpha}\oint dz \ \eta^\alpha_k \aabracket{\OO^{(k)}_\alpha}_0(\bb).\]

\begin{thm}\label{thm-MC}
The functional $I$ satisfies the following Maurer--Cartan equation (MC)
$$
\delta I+{1\over 2}\{I, I\}=0.
$$
In fact, this equation is equivalent to the topological recursion relation for $\F_0$.

\end{thm}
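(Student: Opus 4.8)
The plan is to reduce the Maurer--Cartan equation to an identity among genus-0 one- and two-point descendants by computing $\delta I$ and $\{I,I\}$ separately and then simplifying their sum inside $\Xi$, where total $z$-derivatives vanish and integration by parts is therefore free.

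First I would compute $\delta I$. Since $\delta\eta^\alpha_k=0$ and $\delta b^\alpha_k=\pa_z\eta^\alpha_{k-1}$, the derivation $\delta$ acts on $I$ only through the chain rule on $\pa_{k,\alpha}\F_0(\bb(z))$; using $\pa_{m,\beta}\pa_{k,\alpha}\F_0=\aabracket{\OO^{(m)}_\beta\OO^{(k)}_\alpha}_0$ and re-indexing $m\mapsto m+1$, one obtains a density of the form $\eta^\gamma_k(\pa_z\eta^\delta_m)\aabracket{\OO^{(m+1)}_\delta\OO^{(k)}_\gamma}_0$, in which the descendant level has been raised by one. Next I would compute $\{I,I\}$ from the bracket (P). The key simplification is that the Lagrangian density of $I$ depends on the fields $b^\alpha_k,\eta^\alpha_k$ but not on their $z$-derivatives, so only the term of (P) with no $z$-derivatives contributes and $\{I,I\}=\sum_{\alpha,\beta}g^{\alpha\beta}\oint dz\,A_\alpha\,\pa_z A_\beta$ with $A_\alpha=\sum_{k,\gamma}\eta^\gamma_k\aabracket{\OO^{(0)}_\alpha\OO^{(k)}_\gamma}_0$. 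Expanding $\pa_z A_\beta$ by Leibniz splits this into a term carrying one $\pa_z\eta$ (weighted by a $g^{\alpha\beta}$-contracted product of two two-point functions) and a term where $\pa_z$ lands on a correlator, producing $\pa_z b^\epsilon_n$ times a three-point function.

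Then I would assemble $\delta I+\tfrac12\{I,I\}$ in $\Xi$. The only relations available are the integration-by-parts identities coming from $\oint dz\,\pa_z(\eta^\gamma_k\eta^\delta_m F)=0$; together with the anticommutativity of the $\eta$'s and graded integration by parts, this single family lets me trade the antisymmetric part of the $\eta\,\pa_z\eta$ monomials for $\eta\eta\,\pa_z b$ monomials and bring the whole density to a canonical form. Reading off coefficients, the vanishing of $\delta I+\tfrac12\{I,I\}$ is equivalent to two correlator identities: a symmetric two-point relation $\aabracket{\OO^{(m+1)}_\delta\OO^{(k)}_\gamma}_0+\aabracket{\OO^{(k+1)}_\gamma\OO^{(m)}_\delta}_0=\sum g^{\alpha\beta}\aabracket{\OO^{(0)}_\alpha\OO^{(k)}_\gamma}_0\aabracket{\OO^{(0)}_\beta\OO^{(m)}_\delta}_0$ (from the irreducible symmetric $\eta\,\pa_z\eta$ part) and the antisymmetrized three-point relation (from the $\eta\eta\,\pa_z b$ part). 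Each follows from \eqref{TRR}; conversely, differentiating the two-point relation by $\pa_{n,\epsilon}$ yields the symmetric combination of two instances of \eqref{TRR}, which added to the antisymmetric relation reconstructs \eqref{TRR} itself. This establishes the equivalence in both directions.

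The main obstacle is bookkeeping rather than conceptual: one must carry the Koszul signs for the odd variables $\eta$ correctly through $\delta$, through the bracket (P), and through every integration by parts, and must check that the factor $\tfrac12$ in the Maurer--Cartan equation is exactly absorbed by the symmetrization coming from $g^{\alpha\beta}$ and from the $\alpha\leftrightarrow\beta$ symmetry of $A_\alpha\,\pa_z A_\beta$. The subtler point is recognizing that the reduction modulo total derivatives produces both a symmetric (two-point) and an antisymmetric (three-point) relation, and verifying that these together are equivalent to---neither weaker nor stronger than---the single relation \eqref{TRR}.
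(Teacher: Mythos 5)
Your proposal is correct, and its skeleton coincides with the paper's proof: compute $\delta I$ and $\tfrac12\{I,I\}$ explicitly (your formulas agree with the paper's, including the key simplification that only the derivative-free term of the bracket \eqref{PSBracket} contributes because the density of $I$ contains no $z$-derivatives of fields), reduce modulo total derivatives, and identify the resulting coefficient identities with \eqref{TRR}. The one genuine difference is the extraction step. The paper takes the variational derivative $\delta/\delta\eta^\alpha_k(z)$ of the functional (legitimate since the functional is quadratic in the $\eta$'s) and compares coefficients of the independent monomials $\pa_z\eta^\beta_m$ and $\eta^\beta_m\pa_z b^\gamma_l$; this produces the symmetric two-point relation together with the \emph{full} relation \eqref{TRR}, so the direction (MC $\Rightarrow$ TRR) is immediate. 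Your direct reduction of the density to canonical form sees only the antisymmetric part of the $\eta^\gamma_k\eta^\delta_m\pa_z b^\epsilon_n$ coefficients, so you obtain the symmetric two-point relation plus only the \emph{antisymmetrized} three-point relation, and you therefore need---and correctly supply---the reconstruction argument: differentiating the two-point relation by $\pa_{l,\gamma}$ gives the symmetric combination of two instances of \eqref{TRR} (here one uses the symmetry of $g^{\alpha\beta}$ to match the two cross terms), and adding the antisymmetric relation recovers \eqref{TRR}. The two schemes are consistent because of a Helmholtz-type identity: the symmetric part (under $(\alpha,k)\leftrightarrow(\beta,m)$) of the three-point coefficient in the variational derivative is automatically the $b$-derivative of the two-point coefficient. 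One point you should state explicitly in the forward direction (TRR $\Rightarrow$ MC): the symmetric two-point relation does not follow from \eqref{TRR} by pure algebra but by integration---its $\bb$-derivative is a consequence of \eqref{TRR} and both sides vanish at $\bb=0$ since genus-zero correlators require at least three insertions; this is exactly the paper's ``both sides cannot have a constant to differ'' step, and your converse argument already contains the needed differentiation identity.
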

\begin{proof}We have
\begin{align*}
\delta I=\sum_{k,\alpha,m,\beta}&\oint dz\ \pa_z\eta^\beta_m  \eta^\alpha_k \aabracket{\OO^{(m+1)}_\beta\OO^{(k)}_\alpha}_0(\bb)
\end{align*}
and
\begin{align*}
{1\over 2}\{I, I\}&=\sum_{k,\alpha,m,\beta, \delta}{1\over 2}\oint dz\ \bracket{ \eta^\alpha_k \aabracket{\OO^{(k)}_\alpha \OO^{(0)\delta}}_0(\bb)}{\pa\over \pa z} \bracket{ \eta^\beta_m \aabracket{\OO^{(0)}_\delta\OO^{(m)}_\beta}_0(\bb)}.
\end{align*}
Note that both $\delta I$ and ${1\over 2}\{I, I\}$ are quadratic in $\eta$'s. Hence to show that their sum are equal to zero, it is enough to check its variation with respect to $\eta$ vanishes. Taking into account the fermionic property of $\eta$ and integration by part, we find
\begin{align*}
{\delta (\delta I)\over \delta \eta^\alpha_k(z)}=&-\sum_{\beta}\pa_z\eta^\beta_m\aabracket{\OO^{(m+1)}_\beta\OO^{(k)}_\alpha}_0(\bb)-\sum_{\beta}{\pa\over \pa z}\bracket{\eta^\beta_m\aabracket{\OO^{(k+1)}_\alpha\OO^{(m)}_\beta}_0(\bb)}\\
=&-\sum_{\beta}\pa_z \eta_m^\beta\aabracket{\OO^{(m+1)}_\beta\OO^{(k)}_\alpha+\OO^{(k+1)}_\alpha\OO^{(m)}_\beta}_0(\bb)-\sum_{\beta,\gamma}\eta_m^\beta \pa_z b^\gamma_l \aabracket{\OO^{(k+1)}_\alpha\OO^{(m)}_\beta \OO^{(l)}_\gamma}_0(\bb).
\end{align*}
Similarly,
\begin{align*}
{\delta \bracket{{1\over 2}\{I, I\}}\over \delta  \eta^\alpha_k(z)}=&\sum_{\delta, \beta}\aabracket{\OO^{(k)}_\alpha \OO^{(0)\delta}}_0(\bb){\pa\over \pa z} \bracket{ \eta^\beta_m \aabracket{\OO^{(0)}_\delta\OO^{(m)}_\beta}_0(\bb)}\\
=&~ \sum_{\delta, \beta}\pa_z \eta_m^\beta \aabracket{\OO^{(k)}_\alpha \OO^{(0)\delta}}_0(\bb)\aabracket{\OO^{(0)}_\delta\OO^{(m)}_\beta}_0(\bb)\\
& +\sum_{\delta, \beta}\eta_m^\beta \pa_z b^\gamma_l \aabracket{\OO^{(k)}_\alpha \OO^{(0)\delta}}_0(\bb) \aabracket{\OO^{(0)}_\delta\OO^{(m)}_\beta \OO^{(l)}_\gamma}_0(\bb).
\end{align*}

Comparing the coefficients of $\eta^\beta_m$ and $\pa_z \eta^\beta_m$, we see that $\delta I+{1\over 2}\{I,I\}=0$ is equivalent to
\begin{align*}
\aabracket{\OO^{(k+1)}_\alpha\OO^{(m)}_\beta \OO^{(l)}_\gamma}_0=\sum_{\delta}\aabracket{\OO^{(k)}_\alpha \OO^{(0)\delta}}_0 \aabracket{\OO^{(0)}_\delta\OO^{(m)}_\beta \OO^{(l)}_\gamma}_0 \\
\aabracket{\OO^{(m+1)}_\beta\OO^{(k)}_\alpha+\OO^{(k+1)}_\alpha\OO^{(m)}_\beta}_0=\sum_{\delta}\aabracket{\OO^{(k)}_\alpha \OO^{(0)\delta}}_0\aabracket{\OO^{(0)}_\delta\OO^{(m)}_\beta}_0.
\end{align*}
The first equation is precisely TRR. To show the second equation, we take its derivative, which is TRR, and note that both sides cannot have a constant to differ.
\end{proof}

The above connection between TRR and MC equation gives a very simple way to understand Proposition \ref{prop:dispersionless hierarchy}.

\begin{proof}[Proof of Proposition \ref{prop:dispersionless hierarchy}]
Let us restrict our fields to the following locus
$$
b^\alpha_{k>0}=0, \quad \eta^{\alpha}_k=\text{constant}.
$$
This is called the stationary sector. Let $I_S$ denote the restriction of $I$ to this sector. Since $\delta=0$ in the stationary sector, Maurer--Cartan equation implies
$$
\{I_S, I_S\}=0.
$$
Observe that
$$
  I_S=\sum_{k,\alpha}\oint dz\ \eta^{\alpha}_k G_{k,\alpha}(b_0^\bullet).
  $$
Since $\eta^{\alpha}_k$'s are fermionic variables,
$$
\{I_S, I_S\}=\sum_{k,\alpha;m,\beta} \eta^\alpha_k \eta^\beta_m \fbracket{\oint dz\ G_{k,\alpha}, \oint dz\ G_{m,\beta} }=0
$$
which is equivalent to the commutativity of the local functionals $\{\oint dz\ G_{k,\alpha}\}$.
\end{proof}

 \subsection{A generalization of BCOV theory}
 Let $Y$ be a  Calabi--Yau manifold with holomorphic volume form $\Omega_Y$. B-model on $Y$ can be described by the Kodaira--Spencer gravity \cite{BCOV}. We will work with the formulation in \cite{Si-Kevin} that generalizes the Kodaira--Spencer gravity by including the gravitational descendant. Such generalization is called BCOV theory in \cite{Si-Kevin}. It is precisely in this formulation that integrable hierarchy will appear naturally.

 The fields in BCOV theory is given by the cochain complex
 $$
     \PV(Y)\llb t\rrb, \quad Q=\dbar+t\pa.
 $$
 Here
 $$
 \PV(Y)=\bigoplus_{i,j}\PV^{i,j}(Y), \quad \PV^{i,j}(Y):=\Omega^{0,j}(Y, \wedge^i T_Y^{1,0})
 $$
 is the space of smooth polyvector fields. $\PV^{i,j}(Y)$ sits in degree $i+j$ and $t$ is a formal variable of degree 2 representing the gravitational descendant. Here $\pa$ is the divergence operator on $\PV(Y)$ with respect to $\Omega_Y$: that is, if we identify $\PV(Y)$  with differential forms on $Y$ via contracting $\Omega_Y$
 $$
   \PV^{\bullet, \bullet}(Y)\stackrel{\vdash \Omega_Y}{\to} \Omega^{d-\bullet, \bullet}(Y),\quad d=\dim_{\C}(Y),
 $$
then $\pa$ corresponds to the holomorphic de Rham differential, hence justifying the notation. In \cite{Si-Kevin}, a degenerate BV structure together with a solution of classical master equation is found. This builds up the classical BCOV theory.

In this section, we generalize this set-up and introduce a notion of $H$-valued BCOV theory. Here $H$ is the state space as in the previous subsection and we will adopt the same notations used there. Let
 $$
  \PV(Y;H):= \PV(Y)\otimes H, \quad \PV(Y;H)\llb t\rrb= (\PV(Y)\otimes H)\llb t\rrb.
 $$

 Let us denote the following trace map by integrating out polyvectors
 $$
 \Tr: \PV_c(Y)\to \C, \quad \mu\to \int_Y (\mu\vdash \Omega_Y)\wedge \Omega_Y.
 $$
 The subscript $c$ refers to ``compactly supported". 

 The BV kernel of $H$-valued BCOV theory is degenerate, given by
 $$
   K_0= (g^{\alpha\beta}\OO_\alpha\otimes \OO_\beta)(\pa\otimes 1)\delta_Y
 $$
 where $\delta_Y$ is the ($\PV(Y\times Y)$-valued) delta distribution supported on the diagonal of $Y\times Y$. With respect to the pairing
 $$
\PV_c(Y)\otimes \PV_c(Y)\to \C,\quad \mu_1\otimes \mu_2\to \Tr(\mu_1\wedge \mu_2),
 $$
 $(\pa\otimes 1)\delta_Y  $ is the integral kernel of the operator $\pa\colon \PV(Y)\to \PV(Y)$.

$K_0$ is a $(-1)$-shifted Poisson kernel, defining the degree $1$ BV bracket
 $$
   \{-,-\}_{\mr{BV}}\colon \Ol(\PV(Y;H)\llb t\rrb)\times \Ol(\PV(Y;H)\llb t\rrb)\to \Ol(\PV(Y;H)\llb t\rrb).
 $$
 Here $\Ol(\PV(Y;H)\llb t\rrb)$ is the space of local functionals on $\PV(Y;H)\llb t\rrb$.  The triple $(\PV(Y;H)\llb t\rrb, Q, K^H_0)$ defines a degenerate BV data, and the formalism in previous section works for this degenerate situation in a similar fashion. For a general framework of degenerate BV theory at the classical level, one may refer to \cite{ButsonYoo}.

\begin{defn}\label{vertex contribution}We define the \emph{classical BCOV interaction} $I^H_Y$ by
$$
  I^H_Y(\mu)=\Tr \abracket{e^\mu}_0,\quad \mu \in \PV_c(Y;H)\llb t\rrb
$$
where
$$
\abracket{\mu_1\otimes t^{k_1}\OO_{i_1}, \cdots, \mu_n\otimes t^{k_n}\OO_{i_n}}_0:=\abracket{\OO_{i_1}^{(k_1)}\cdots  \OO_{i_n}^{(k_n)}}_0 \mu_1\wedge\cdots \wedge \mu_n, \quad \mu_i\in \PV_c(Y).
$$
\end{defn}
\begin{prop}\label{prop-CME}
The interaction term $I_Y^H$ satisfies the classical master equation
$$
  QI_Y^H+{1\over 2}\fbracket{I_Y^H, I_Y^H}_{\mr{BV}}=0.
$$
\end{prop}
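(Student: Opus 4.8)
The plan is to compute $QI_Y^H$ and $\fbracket{I_Y^H,I_Y^H}_{\mr{BV}}$ separately and show that their sum vanishes precisely because of the topological recursion relation \eqref{TRR}, in complete parallel with the proof of Theorem \ref{thm-MC}. In fact the present data is a geometric avatar of the data there, under the dictionary $\pa_z\leftrightarrow\pa$ (the divergence), $\oint dz\leftrightarrow\Tr$, the descendant-raising differential $\delta\leftrightarrow t\pa$, and the Poisson kernel $g^{\alpha\beta}\pa_z\delta(z-w)\leftrightarrow g^{\alpha\beta}(\pa\otimes 1)\delta_Y$; here the exponential insertion $e^\mu$ plays the role that $e^{\sum b^\alpha_k\OO^{(k)}_\alpha}$ plays in $\aabracket{-}_0$. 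So I expect the whole identity to reduce, after comparing Taylor coefficients in $\mu$, to the single relation \eqref{TRR} (together with its $\pa$-derivative, as in Theorem \ref{thm-MC}).

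First I would dispose of the $\dbar$ summand of $Q=\dbar+t\pa$. Since the genus-zero correlation coefficients are constants and $\dbar$ is a graded derivation for the wedge product of polyvector fields, the sum over insertions reassembles $\dbar\abracket{e^\mu}_0=\abracket{\dbar\mu\otimes e^\mu}_0$, so that $\dbar I_Y^H(\mu)=\Tr\bracket{\dbar\abracket{e^\mu}_0}=0$. Indeed $\Tr\circ\dbar=0$: after contracting with the holomorphic volume form, $\dbar$ on polyvectors intertwines with the Dolbeault operator on forms, $\dbar\Omega_Y=0$, and $\int_Y\dbar(-)=0$ by Stokes on compactly supported input. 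This leaves $QI_Y^H(\mu)=\Tr\abracket{t\pa\mu\otimes e^\mu}_0$, i.e. one insertion of $\mu$ has its descendant level raised by one and is differentiated by the divergence $\pa$.

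Next I would expand $\tfrac12\fbracket{I_Y^H,I_Y^H}_{\mr{BV}}$ using the degenerate kernel $K_0=(g^{\alpha\beta}\OO_\alpha\otimes\OO_\beta)(\pa\otimes 1)\delta_Y$. Contracting one insertion from each copy of $I_Y^H$ through $K_0$, the factor $g^{\alpha\beta}\OO_\alpha\otimes\OO_\beta$ glues the two correlators across a genus-zero node carrying $\OO^{(0)\sigma}$ and $\OO^{(0)}_\sigma$ at descendant level zero, exactly as on the right-hand side of \eqref{TRR}, while $\delta_Y$ identifies the two gluing points on $Y$ and $\pa$ differentiates one leg. The outcome is a sum of products $g^{\sigma\sigma'}\Tr\bracket{\abracket{\OO^{(0)}_\sigma\otimes e^\mu}_0\,\pa\,\abracket{\OO^{(0)}_{\sigma'}\otimes e^\mu}_0}$ of one-insertion correlators, the symmetry of the bracket absorbing the factor $\tfrac12$.

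Finally I would match the two expressions. Applying \eqref{TRR} to the descendant-raised correlator $\abracket{t\pa\mu\otimes e^\mu}_0$ splits it into precisely the products produced by the bracket, and the two occurrences of $\pa$ — one coming from $t\pa$, one from the kernel — are identified after integrating by parts under $\Tr$, using $\Tr\circ\pa=0$ (again Stokes, now with $\pa\Omega_Y=0$ since $\Omega_Y$ is a holomorphic top form). The hard part will be exactly this last bookkeeping: tracking the Koszul signs, the distribution of the descendant powers of $t$ among the insertions, and the placement of the single $\pa$ relative to the node, so that the geometric polyvector/$\Tr$ factors on the two sides agree term by term. This is the same D-module–type integration-by-parts manipulation used to show that \eqref{PSBracket} is well-defined modulo total derivatives and in the proof of Theorem \ref{thm-MC}; once it is carried out, $QI_Y^H+\tfrac12\fbracket{I_Y^H,I_Y^H}_{\mr{BV}}=0$ becomes a restatement of \eqref{TRR}, which holds.
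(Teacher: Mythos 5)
Your proposal is correct and takes essentially the same approach as the paper, whose proof is simply a citation of \cite[Lemma 2.10.2]{Si-Kevin} and \cite[Lemma 4.6]{thesis}: the argument there is exactly your outline, with the $\dbar$-part of $Q$ annihilating $I_Y^H$ (derivation property plus $\Tr\circ\dbar=0$) and the $t\pa$-part cancelling $\tfrac12\fbracket{I_Y^H,I_Y^H}_{\mr{BV}}$ via the genus-zero relation \eqref{TRR} and integration by parts under $\Tr$. The Koszul-sign and descendant bookkeeping you defer at the end is precisely what those cited references carry out.
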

\begin{proof} This proof is completely the same as that in \cite[Lemma 2.10.2]{Si-Kevin} \cite[Lemma 4.6]{thesis} which only uses the genus 0 topological recursion relation.
\end{proof}

When $H=\C$ and $\abracket{-}_0$ comes from the pure gravity, this reduces to the classical BCOV theory as described in \cite{Si-Kevin}.

\subsection{Current observables and integrable hierarchies}\label{sec:current-integrable}
We consider the situation when $Y$ is one-dimensional Calabi--Yau, i.e.,
$$
  Y=\C, \quad \C^\times,\quad  \text{or} \quad E_\tau:=\C/(\Z\oplus \Z\tau).
$$
We will work with $Y=\C^\times$ in this paper, which will be convenient for the purpose of integrable hierarchy. Let us first fix some notations. Let $z$ be the linear holomorphic coordinate on $\C$. We write $\C^\times$ as the quotient of $\C$ by identifying
$$
z\sim z+1.
$$
The Calabi--Yau structure on $\C^\times$ is given by the holomorphic 1-form $dz$.

The field content of the $H$-valued BCOV  theory on $\C^\times$ is
$$
\E^H:=(\PV(\C^\times)\otimes H)\llb t\rrb=\bracket{\Omega^{0,\bullet}(\C^\times)\otimes H}\llb t\rrb \oplus \bracket{\Omega^{0,\bullet}(\C^\times, T_{\C^\times}[-1])\otimes H} \llb t\rrb.
$$
We will represent an arbitrary field $\mu$ in components by
$$
\mu=\sum_{k\geq 0, \alpha}
\bracket{\lambda^\alpha_k\otimes \OO_\alpha t^k + \rho^\alpha_k \pa_z \otimes \OO_\alpha t^k}
$$
where
$$
  \lambda^\alpha_k\in \Omega^{0,\bullet}(\C^\times), \quad \rho^\alpha_k \in  \Omega^{0,\bullet}(\C^\times).
$$
With respect to the trace pairing, the $(\PV(\C^\times)\otimes \PV(\C^\times))$-valued $\delta$-distribution $\delta_{\C^\times}$ is 
$$
   \delta_{\C^\times}= \delta(z-w) (d\bar z-d\bar w) (\pa_z-\pa_w).
$$
Here $\delta(z-w)$ is the $\delta$-function on $\C^\times$ normalized by
$$
\int_{\C^\times} f(z,\bar z)\delta(z-w)dz\wedge d\bar z= f(w,\bar w), \qquad \forall  f\in C^\infty(\C^\times).
$$
The $\delta$-distribution $\delta_{\C^\times}$ has the defining property that
$$
\Tr_{w\in \C^\times} (\delta_{\C^\times}(z-w)\mu(w,\bar w,\pa_w, d\bar w))=\mu(z,\bar z, \pa_z, d\bar z), \qquad \forall \mu \in \PV(\C^\times).
$$
The BV kernel $K_0$ is therefore
$$
K_0=\sum_{\alpha\beta}(g^{\alpha\beta}\OO_\alpha\otimes \OO_\beta) (\pa\otimes 1)\delta_{\C^\times}=\sum_{\alpha}(\OO^\alpha\otimes \OO_\alpha) \pa_z \delta(z-w) (d\bar z-d\bar w) .
$$

Let us consider the following linear maps
$$
\E^H \to \Omega^{0,\bullet}(\C^\times), \quad \mu \mapsto \pa_z^m \lambda^\alpha_k.
$$
By abuse of notation, we will just call this map $\pa_z^m \lambda^\alpha_k$. Similarly we have $\pa_z^m \rho^\alpha_k$.   These linear maps will be denoted by
$$
\pa_z^m \lambda^\alpha_k, \pa_z^m \rho^\alpha_k: \quad \E^H \to \Omega^{0,\bullet}(\C^\times).
$$
Note that $\pa_z^m \lambda^\alpha_k$'s are even maps and $\pa_z^m \rho^\alpha_k$'s are odd maps.

By abuse of notation, we let $\C\llb \pa_z^m \lambda^\alpha_k,\pa_z^m \rho^\alpha_k\rrb $ denote the power series  in $\{\pa_z^m \lambda^\alpha_k, \pa_z^m\rho^\alpha_k\}$'s, where $\pa_z^m \lambda^\alpha_k$'s are  even elements and $\pa_z^m \rho^\alpha_k$'s are odd elements. This is a formal differential ring generated by the symbols $\{\lambda^\alpha_k, \rho^\alpha_k\}$'s. Given $\mc L\in \C\llb \pa_z^m \lambda^\alpha_k,\pa_z^m \rho^\alpha_k\rrb $, we define the $\Omega^{\bullet,\bullet}(\C^\times)$-valued formal functions $\J_{\L}$ on the field space $\E^{H}$ by
$$
\J_{\L}:=dz\ \mc L(\pa_z^m \lambda^\alpha_k, \pa_z^m \rho^\alpha_k).
$$

Introduce a differential $\delta$ on $\C\llb \pa_z^m \lambda^\alpha_k,\pa_z^m \rho^\alpha_k\rrb $ by
$$
\delta \lambda^\alpha_k=\pa_z  \rho^{\alpha}_{k-1} , \quad \delta \rho^{\alpha}_k=0.
$$
Note that this is of the same form as the one defined in Subsection \ref{subsection:Dispersionless hierarchy}. The following formula holds (by unpacking various definitions)
$$
Q \J_{\L}= \dbar(\J_{\L})+\J_{\delta \L}.
$$
Here the differential $Q$ on $\E^H$ induces a differential by duality on various functions on $\E^H$ (here is $\Omega^{\bullet,\bullet}(\C^\times)$-valued function), in the same fashion as defined in Subsection \ref{sec:free-BV}. This gives $Q \J_{\L}$. The second term $\dbar(\J_{\L})$ is the composition of $\J_{\L}$ with the operator $\dbar\colon  \Omega^{0,\bullet}(\C^\times)\to \Omega^{0,\bullet}(\C^\times)$. Intuitively, this may be thought of as having a cochain map $(\C\llb \pa_z^m \lambda^\alpha_k,\pa_z^m \rho^\alpha_k\rrb,\delta )\to \left(\mr{Map} \left( ( \mc E^H ,\dbar), (\Omega^{0,\bullet}(\C^\times), \dbar ) \right), t\pa \right)$.

We will write
$$
\J_{\L}:= \J_{\L}^{(1)}+ \J_{\L}^{(2)}
$$
where $\J_{\L}^{(k)}$ picks up the $k$-form part. Writing the above formula in components says
$$
Q\J_{\L}^{(1)}= \J_{\delta \L}^{(1)}, \quad Q \J_{\L}^{(2)}=\dbar \J_{\L}^{(1)}+ \J_{\delta \L}^{(2)}.
$$
Note that $\dbar \J_{\L}^{(1)}=d \J_{\L}^{(1)}$ by the type reason, so $\J_{\L}^{(2)}$ can be viewed as the topological descendant \cite{Witten-TQFT} of $\J_{\L}^{(1)}$.   Let us fix the circle
$$
 C=\{z\in [0,1]\}\subset \C^\times.
$$
This leads to two observables
$$
   \oint_C \J_{\L}^{(1)}, \quad \int_{\C^\times} \J_{\L}^{(2)}.
$$
We often call $\oint_C \J_{\L}^{(1)}$ a \emph{current observable}.

Recall from Subsection \ref{sec:free-BV} that any local functional $S$ defines a derivation $\{S,-\}_{\mr{BV}}=\delta_S$ on observables.  We will be only interested in local functionals of the form $\int_{\C^\times} \J_{\L}^{(2)}$ defined above. In this case we could have a rather explicit description of such BV bracket.

\begin{prop}\label{prop-bracket} Let $\L_1, \L_2\in \C\llb \pa_z^m \lambda^\alpha_k, \pa_z^m\rho^\alpha_k\rrb $. Denote
$$
\bbracket{\L_1, \L_2}:=\sum_{\alpha,\beta,k,m}\fbracket{\bracket{-{\pa\over \pa z}}^k {\pa \L_1\over \pa (\pa_z^k \lambda_0^\alpha)}}~g^{\alpha\beta}~{\pa\over \pa z} \fbracket{ \bracket{-{\pa\over \pa z}}^m {\pa \L_2\over \pa (\pa_z^m \lambda_0^\beta)} } \in  \C\llb \pa_z^m \lambda^\alpha_k, \pa_z^m\rho^\alpha_k\rrb.
$$
Then $\fbracket{\int_{\C^\times} \J_{\L_1}^{(2)},-}_{\mr{BV}}$ acts on observables $  \oint_C \J_{\L_2}^{(1)}$ and  $\int_{\C^\times} \J_{\L_2}^{(2)}$ by
$$
   \fbracket{\int_{\C^\times} \J_{\L_1}^{(2)},\oint_C \J_{\L_2}^{(1)}}_{\mr{BV}}=\oint_C \J_{[\L_1, \L_2]}^{(1)}, \quad \fbracket{\int_{\C^\times} \J_{\L_1}^{(2)},\int_{\C^\times} \J_{\L_2}^{(2)}}_{\mr{BV}}=\int_{\C^\times} \J_{[\L_1,\L_2]}^{(2)}.
$$
\end{prop}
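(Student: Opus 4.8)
The plan is to derive both identities from one computation: the local functional $S_1 := \int_{\C^\times}\J_{\L_1}^{(2)}$ determines a derivation $\delta_{S_1} = \fbracket{S_1, -}_{\mr{BV}}$ on observables (Subsection \ref{sec:free-BV}), and the two brackets are obtained by letting $\delta_{S_1}$ act on the integrands of $\oint_C\J_{\L_2}^{(1)}$ and $\int_{\C^\times}\J_{\L_2}^{(2)}$ and extracting the appropriate form degree. So I would first pin down $\delta_{S_1}$ on fields, and then apply it by the Leibniz rule.

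For the first step the decisive input is the shape of the BV kernel $K_0 = \sum_\alpha(\OO^\alpha\otimes\OO_\alpha)\pa_z\delta(z-w)(d\bar z - d\bar w)$. After the divergence $\pa$ has been applied both legs of $K_0$ are of function type, so the only nonvanishing elementary bracket is between the $k=0$ fields $\lambda_0^\alpha$, while the factor $(d\bar z - d\bar w)$ makes this pairing off-diagonal in Dolbeault degree, coupling the $(0,0)$- and $(0,1)$-parts of $\lambda_0$. Taking the variation of $S_1$ and removing every $\pa_z$ from $\delta\lambda_0^\alpha$ by integration by parts on $\C^\times$ (no boundary terms, by the identification $z\sim z+1$ together with compact support transverse to $C$), I expect to read off that $\delta_{S_1}$ moves only $\lambda_0$, with $(0,0)$-component flow
\begin{equation*}
\delta_{S_1}\lambda_0^\beta = -\sum_{\alpha,m}g^{\alpha\beta}\,\pa_z\!\left[\bracket{-\pa_z}^m\frac{\pa\L_1}{\pa(\pa_z^m\lambda_0^\alpha)}\right],\qquad \delta_{S_1}(\text{all other components})=0,
\end{equation*}
where the $\pa_z$ comes from $\pa_z\delta(z-w)$, the $(-\pa_z)^m$ from the integration by parts, $g^{\alpha\beta}$ from $\OO^\alpha\otimes\OO_\alpha$, and the overall sign is to be fixed by the Koszul and orientation conventions.

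For the second step, note that this single $(0,0)$-flow governs both brackets. In $\oint_C\J_{\L_2}^{(1)} = \oint_C dz\,(\L_2)^{(0,0)}$ only $(0,0)$-factors may be varied if the result is to remain a $1$-form along $C$; in $\int_{\C^\times}\J_{\L_2}^{(2)} = \int_{\C^\times}dz\wedge(\L_2)^{(0,1)}$ varying the unique $(0,1)$-factor by its (type-$(0,1)$) flow would force $d\bar z\wedge d\bar z = 0$, so once again only the $(0,0)$-flow on $(0,0)$-factors survives. Applying the derivation $\delta_{S_1}$ by the chain rule and then integrating by parts in $z$ (under $\oint_C$, resp.\ $\int_{\C^\times}$) to transfer the derivatives onto the $\L_2$-factor, and substituting the formula for $\delta_{S_1}\lambda_0^\beta$, reproduces exactly the density $\bbracket{\L_1,\L_2}$ of the statement; this last manipulation is the same D-module computation already carried out for the bracket \eqref{PSBracket} in Subsection \ref{subsection:Dispersionless hierarchy}. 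Extracting the $(1,0)$-part gives $\oint_C\J_{\bbracket{\L_1,\L_2}}^{(1)}$ and the $(1,1)$-part gives $\int_{\C^\times}\J_{\bbracket{\L_1,\L_2}}^{(2)}$, proving both identities at once.

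The hard part will be the Dolbeault and form-type bookkeeping forced by the $(d\bar z - d\bar w)$ factor, together with the Koszul signs produced by the odd variables $\rho_k^\alpha$. Concretely, one must check that the $(0,1)$-derivative of $\L_1$ supplied by $\J^{(2)}_{\L_1}$, paired through $K_0$ against the $(0,0)$-part of $\lambda_0$, yields a genuine $(0,0)$-quantity, so that $\delta_{S_1}$ preserves form type and lands the output in the correct degree; and one must verify that the localization of $\delta(z-w)$ against $\oint_C$ versus $\int_{\C^\times}$ returns the claimed cycle/class with the correct overall sign---precisely the point where the graded antisymmetry of $\bbracket{-,-}$ (so that $\oint\bbracket{\L_2,\L_1} = -\oint\bbracket{\L_1,\L_2}$) must be matched against the sign in $\delta_{S_1}\lambda_0^\beta$. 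Once the types and signs are fixed, the algebraic identity $\bbracket{\L_1,\L_2}$ emerges automatically from the $\delta$-function contractions and integration by parts, so the computational core is routine after the geometric setup is pinned down.
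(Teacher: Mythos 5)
Your overall strategy---extracting the Hamiltonian flow $\delta_{S_1}$ of $S_1=\int_{\C^\times}\J^{(2)}_{\L_1}$ from the explicit kernel $K_0$ and then applying it to the two observables by the Leibniz rule, with the same integration-by-parts manipulations as for the bracket \eqref{PSBracket}---is exactly the ``direct computation'' that the paper's one-line proof invokes, and your treatment of the first identity is sound, since $\oint_C\J^{(1)}_{\L_2}$ depends only on the $(0,0)$-components of the fields. The genuine error is the type-counting claim you use to dispatch the second identity. It is not true that ``varying the unique $(0,1)$-factor by its type-$(0,1)$ flow forces $d\bar z\wedge d\bar z=0$'': in the Leibniz rule the $(0,1)$-factor is \emph{replaced} by its flow, not wedged with it, and the flow of the $(0,1)$-component is a genuine nonzero $(0,1)$-form, not a $(0,2)$-form. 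Writing $\lambda_0^\alpha=u^\alpha+\eta^\alpha$ with $u^\alpha$ the $(0,0)$- and $\eta^\alpha$ the $(0,1)$-part, the kernel term $-\pa_z\delta(z-w)\,d\bar w$ produces the flow $\delta_{S_1}u^\beta$ you wrote, but the other term $\pa_z\delta(z-w)\,d\bar z$ produces
$$
\delta_{S_1}\eta^\beta=\pm\sum_{\alpha,m}g^{\alpha\beta}\,\pa_z\Bigl[\bracket{-\pa_z}^m\frac{\pa\L_1}{\pa(\pa_z^m\lambda_0^\alpha)}\Bigr]^{(0,1)},
$$
where the superscript means that $\L_1$'s own $(0,1)$-form factor is retained; the $d\bar w$ carried by $\delta S_1/\delta u^\alpha$ is consumed by the $w$-integration against $\delta(z-w)$, and only the kernel's $d\bar z$ survives, so no square of $d\bar z$ ever appears.

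Dropping this second flow loses exactly half of the answer to the second identity. The $(0,1)$-part of the density $[\L_1,\L_2]$ consists of two kinds of monomials: those where the $(0,1)$-form factor sits inside $\pa\L_2/\pa(\pa_z^m\lambda_0^\beta)$, and those where it sits inside $\pa\L_1/\pa(\pa_z^k\lambda_0^\alpha)$. Your surviving term (the $(0,0)$-flow acting on the $(0,0)$-factors of $\L_2$) accounts only for the first kind; the second kind is produced precisely by $\delta_{S_1}\eta^\beta$ acting on the $(0,1)$-slot of $\J^{(2)}_{\L_2}$, equivalently by the pairing of $\delta S_1/\delta\lambda_0^{(0,0)}$ with $\delta S_2/\delta\lambda_0^{(0,1)}$ through the $-d\bar w$ leg of $K_0$. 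A concrete test: take $H=\C$, $g=1$, $\L_1=\tfrac12\lambda_1\lambda_0^2$, $\L_2=\tfrac16\lambda_0^3$, so that $[\L_1,\L_2]=\lambda_1\lambda_0^2\,\pa_z\lambda_0$. The monomial $\theta\,u^2\pa_z u$ in $\J^{(2)}_{[\L_1,\L_2]}$, where $\theta$ is the $(0,1)$-part of $\lambda_1$, is not a total $\pa_z$-derivative and cannot arise from your single flow, since $\delta S_1/\delta\eta=vu$ contains no $\theta$ at all (here $v$ is the $(0,0)$-part of $\lambda_1$). So the correct bookkeeping is: for $\oint_C\J^{(1)}$ only the $(0,0)$-flow enters, but for $\int_{\C^\times}\J^{(2)}$ both Dolbeault components of the flow enter, and it is only their sum that assembles into $\int_{\C^\times}\J^{(2)}_{[\L_1,\L_2]}$.
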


\begin{proof} This is computed directly using the description of BV bracket  in Subsection \ref{sec:free-BV}.
\end{proof}

\begin{cor} Let $\Obs^{\oint_C}$ denote the space of observables of the form $\{\oint_C \J_{\L}^{(1)}\}_{\L \in \C\llb \pa_z^m \lambda^\alpha_k, \pa_z^m\rho^\alpha_k\rrb }$. We define a bracket $\{-,-\}_C$ on $\Obs^{\oint_C}$ by
$$
 \fbracket{\oint_C \J_{\L_1}^{(1)},\oint_C \J_{\L_2}^{(1)}}_C:=\fbracket{\int_{\C^\times} \J_{\L_1}^{(2)},\oint_C \J_{\L_2}^{(1)}}_{\mr{BV}}.
$$
Then $\{-,-\}_C$ defines a Lie bracket on $\Obs^{\oint_C}$. Moreover, consider the differential
$$
 \delta: \Obs^{\oint_C}\to \Obs^{\oint_C}\colon \quad    \oint_C \J_{\L}^{(1)}\to  \oint_C \J_{\delta \L}^{(1)}.
$$
For simplicity, we still denote this differential by $\delta$. Then the triple $\fbracket{\Obs^{\oint_C},\delta, \fbracket{-,-}_C}$ forms a differential $\Z/2\Z$-graded Lie algebra.
\end{cor}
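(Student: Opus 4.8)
The plan is to reduce the claim to Proposition \ref{prop-dgla} via the dictionary $\lambda^\alpha_k \leftrightarrow b^\alpha_k$, $\rho^\alpha_k \leftrightarrow \eta^\alpha_k$, using Proposition \ref{prop-bracket} to identify $\{-,-\}_C$ with the algebraic bracket on the differential ring. First I would rewrite the bracket entirely in terms of the density-level operation $[-,-]$: combining the defining formula for $\{-,-\}_C$ with the first identity of Proposition \ref{prop-bracket} gives
$$
\fbracket{\oint_C \J_{\L_1}^{(1)}, \oint_C \J_{\L_2}^{(1)}}_C = \fbracket{\int_{\C^\times} \J_{\L_1}^{(2)}, \oint_C \J_{\L_2}^{(1)}}_{\mr{BV}} = \oint_C \J_{[\L_1, \L_2]}^{(1)}.
$$
Thus $\{-,-\}_C$ is the image, under the assignment $\L \mapsto \oint_C \J_{\L}^{(1)}$, of the bracket $[-,-]$ on $\C\llb \pa_z^m \lambda^\alpha_k, \pa_z^m\rho^\alpha_k\rrb$. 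Since $[-,-]$ is precisely the formula \eqref{PSBracket} written in the variables $\lambda, \rho$ in place of $b, \eta$, the assignment $\L \mapsto \oint_C \J_{\L}^{(1)}$ plays exactly the role of $\L \mapsto \oint dz\, \L$ in Subsection \ref{subsection:Dispersionless hierarchy}.

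The crucial point is that $\oint_C \J_{\L}^{(1)}$ depends on $\L$ only modulo total $z$-derivatives, i.e. $\oint_C \J_{\pa_z \L'}^{(1)} = 0$, so that the assignment descends to the quotient of the differential ring by total derivatives. Granting this, the formula above is well-defined on $\Obs^{\oint_C}$, and its graded antisymmetry is immediate: $[\L_1, \L_2] + (-1)^{|\L_1||\L_2|}[\L_2, \L_1]$ is a total $z$-derivative, by the same integration-by-parts identity used to check well-definedness of \eqref{PSBracket} in Subsection \ref{subsection:Dispersionless hierarchy}, hence annihilated by $\oint_C \J^{(1)}$. The graded Jacobi identity and the well-definedness of the differential $\delta$ (together with $\delta^2 = 0$, which holds since $\delta\rho^\alpha_k = 0$ and $\delta\lambda^\alpha_k = \pa_z\rho^\alpha_{k-1}$) follow in the same way from the corresponding facts for $[-,-]$ and $\delta$ on the differential ring.

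Finally, the compatibility $\delta\{O_1, O_2\}_C = \{\delta O_1, O_2\}_C + (-1)^{|O_1|}\{O_1, \delta O_2\}_C$ is exactly the Leibniz identity proved in Proposition \ref{prop-dgla}, now transported along $\L \mapsto \oint_C \J_{\L}^{(1)}$; concretely it rests on $\delta$ commuting with $\pa_z$, just as there. Assembling these facts, $(\Obs^{\oint_C}, \delta, \{-,-\}_C)$ is the image of the differential $\Z/2\Z$-graded Lie algebra of Subsection \ref{subsection:Dispersionless hierarchy} under a bracket- and differential-preserving map, and is therefore itself a differential $\Z/2\Z$-graded Lie algebra.

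I expect the main obstacle to be the well-definedness step, namely showing that $\oint_C$ annihilates $\J^{(1)}$ of a total $z$-derivative: the BV-theoretic definition of $\{-,-\}_C$ is manifestly asymmetric and lives at the level of densities, so all the content of antisymmetry, the Jacobi identity, and the Leibniz rule is funneled through this single descent property. Once it is in place, every dg Lie axiom is inherited from Proposition \ref{prop-dgla} via the dictionary $\lambda \leftrightarrow b$, $\rho \leftrightarrow \eta$.
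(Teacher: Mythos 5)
Your proposal is correct at the level of rigor the paper itself operates at, and it reaches the same intermediate identity $\fbracket{\oint_C \J_{\L_1}^{(1)},\oint_C \J_{\L_2}^{(1)}}_C=\oint_C \J_{[\L_1,\L_2]}^{(1)}$, but it takes a genuinely different route for the heart of the matter, the Jacobi identity. The paper does not invoke Proposition \ref{prop-dgla} there: it unwinds nested $C$-brackets into nested BV brackets,
$$
\fbracket{\oint_C \J_{\L_1}^{(1)},\fbracket{\oint_C \J_{\L_2}^{(1)},\oint_C \J_{\L_3}^{(1)}}_C}_C=\fbracket{\int_{\C^\times} \J_{\L_1}^{(2)},\fbracket{\int_{\C^\times} \J_{\L_2}^{(2)},\oint_C \J_{\L_3}^{(1)}}_{\mr{BV}}}_{\mr{BV}},
$$
and then uses the Jacobi identity of the BV bracket together with the \emph{second} identity of Proposition \ref{prop-bracket}, namely $\fbracket{\int_{\C^\times}\J^{(2)}_{\L_1},\int_{\C^\times}\J^{(2)}_{\L_2}}_{\mr{BV}}=\int_{\C^\times}\J^{(2)}_{[\L_1,\L_2]}$, an identity your argument never uses. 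What this buys is that Jacobi becomes a chain of on-the-nose identities of observables evaluated on fixed representatives $\L_i$; no step requires discarding total $\pa_z$-derivatives. Your route instead funnels closure, antisymmetry, Jacobi, and the Leibniz rule all through the single descent claim $\oint_C\J^{(1)}_{\pa_z \L'}=0$ and then transports Proposition \ref{prop-dgla}. That is consistent with how the paper argues elsewhere (its ``easy to see'' for skew-symmetry, and the proof of the Maurer--Cartan proposition immediately after the corollary, both rest on exactly this claim), so your proof meets the paper's own standard; but be aware that the claim is never proved in the paper and is subtler than ``the integral of a total derivative over a closed contour vanishes'': for a general smooth, non-holomorphic field configuration one only has $\oint_C dz\,\pa_z f=-\oint_C d\bar z\,\pa_{\bar z} f$, which need not vanish, so the descent property really encodes an implicit identification of $\Obs^{\oint_C}$ with the differential ring modulo total $\pa_z$-derivatives. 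Relatedly, your closing step ``the image of a dgla under a structure-preserving map is a dgla'' additionally needs the kernel of $\L\mapsto\oint_C\J^{(1)}_\L$ to be a differential Lie ideal, which is the same issue in different clothing; the paper's BV-theoretic derivation of Jacobi sidesteps it entirely. What your version buys in exchange is an explicit, purely algebraic reduction to Subsection \ref{subsection:Dispersionless hierarchy} via the dictionary $\lambda\leftrightarrow b$, $\rho\leftrightarrow\eta$, a bridge the paper itself only draws in the proposition that follows the corollary.
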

\begin{proof} By Proposition \ref{prop-bracket}, we have
$$
   \fbracket{\oint_C \J_{\L_1}^{(1)},\oint_C \J_{\L_2}^{(1)}}_C =\oint_C \J_{[\L_1, \L_2]}^{(1)}
$$
which still lies in $\Obs^{\oint_C}$. It is easy to see that $\{-,-\}_C$ is skew-symmetric. We check the Jacobi identity holds. By Proposition \ref{prop-bracket}
\begin{align*}
  &\fbracket{\oint_C \J_{\L_1}^{(1)},\fbracket{\oint_C \J_{\L_2}^{(1)},\oint_C \J_{\L_3}^{(1)}}_C}_C\mp \fbracket{\oint_C \J_{\L_2}^{(1)},\fbracket{\oint_C \J_{\L_1}^{(1)},\oint_C \J_{\L_3}^{(1)}}_C}_C\\
  =&\fbracket{\int_{\C^\times} \J_{\L_1}^{(2)},\fbracket{\int_{\C^\times} \J_{\L_2}^{(2)},\oint_C \J_{\L_3}^{(1)}}_{\mr{BV}}}_{\mr{BV}}\mp \fbracket{\int_{\C^\times} \J_{\L_2}^{(2)},\fbracket{\int_{\C^\times} \J_{\L_1}^{(2)},\oint_C \J_{\L_3}^{(1)}}_{\mr{BV}}}_{\mr{BV}}\\
  =& \fbracket{\fbracket{ \int_{\C^\times} \J_{\L_1}^{(2)},  \int_{\C^\times} \J_{\L_2}^{(2)} }_{\mr{BV}},   \oint_C \J_{\L_3}^{(1)}   }_{\mr{BV}}\\
  =&  \fbracket{\int_{\C^\times} \J^{(2)}_{[\L_1, \L_2]} ,   \oint_C \J_{\L_3}^{(1)}   }_{\mr{BV}} =\fbracket{ \fbracket{\oint_C \J_{\L_1}^{(1)},\oint_C \J_{\L_2}^{(1)}}_C ,  \oint_C \J_{\L_3}^{(1)} }_C
\end{align*}
as required. Here we have used the Jacobi identity for the BV bracket. The compatibility of $\delta$ with $\fbracket{-,-}_C$ is similar to Proposition \ref{prop-dgla}. 
\end{proof}

With the above preparation at hand, we can make contact with the discussions in Subsection \ref{subsection:Dispersionless hierarchy}.  Consider the following power series in $\{\lambda^\alpha_k,\rho^\alpha_k\}$'s
$$
\mc L^H=\sum_{k,\alpha} \rho^\alpha_k \abracket{\OO^{(k)}_\alpha \exp\bracket{\sum \lambda^\beta_m \OO_\beta^{(m)}} }_0.
$$
The BCOV interaction is given by
$$
  I_{\C^\times}^H= \int_{\C^\times} \J_{\mc L^H}^{(2)},
$$
which solves the classical master equation
$$Q I_{\C^\times}^H+\frac{1}{2}\fbracket{I_{\C^{\times}}^H, I_{\C^{\times}}^H}_{\mr{BV}}=0.$$

\begin{prop} $\oint_C \J_{\mc L^H}^{(1)}$ satisfies the Maurer--Cartan equation in $\fbracket{\Obs^{\oint_C},\delta, \fbracket{-,-}_C}$
$$
\delta  \oint_C \J_{ \mc L^H}^{(1)}+{1\over 2}\fbracket{\oint_C \J_{\mc L^H}^{(1)},\oint_C \J_{\mc L^H}^{(1)}}_C=0.
$$
\end{prop}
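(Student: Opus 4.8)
The plan is to deduce the statement from Theorem \ref{thm-MC} by exhibiting a strict morphism of differential $\Z/2\Z$-graded Lie algebras that carries the Maurer--Cartan element $I$ of that theorem to $\oint_C \J^{(1)}_{\mc L^H}$. The point is that the algebraic data governing $\Obs^{\oint_C}$ is literally the same as the data $(\Xi,\delta,\{-,-\})$ of Subsection \ref{subsection:Dispersionless hierarchy}, only renamed.

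First I would set up the dictionary. Comparing definitions, the differential ring $\C\llb \pa_z^m \lambda^\alpha_k, \pa_z^m \rho^\alpha_k\rrb$ with its differential $\delta$ and the bracket $[-,-]$ of Proposition \ref{prop-bracket} is the same structure as the ring underlying $\Xi$, under the substitution $\lambda^\alpha_k \leftrightarrow b^\alpha_k$ and $\rho^\alpha_k \leftrightarrow \eta^\alpha_k$: the formula $\delta \lambda^\alpha_k = \pa_z \rho^\alpha_{k-1}$ matches $\delta b^\alpha_k = \pa_z \eta^\alpha_{k-1}$, and the density $[\mc L_1, \mc L_2]$ of Proposition \ref{prop-bracket} is exactly the integrand appearing inside $\oint dz$ in the Poisson bracket \eqref{PSBracket}. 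Using the preceding Corollary, I would then observe that the assignment $\mc L \mapsto \oint_C \J^{(1)}_{\mc L}$ descends to a grading-preserving map $\Phi\colon \Xi \to \Obs^{\oint_C}$: it is well-defined modulo total $z$-derivatives because $\oint_C \J^{(1)}_{\pa_z \mc M}=0$ (integrating a $\pa_z$-exact density around the closed circle $C$ gives zero by periodicity in $z$), it intertwines the two copies of $\delta$ by construction, and it sends $\{-,-\}$ to $\{-,-\}_C$ by the Corollary. Hence $\Phi$ is a morphism of differential $\Z/2\Z$-graded Lie algebras.

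Next I would identify the images. Unwinding the abbreviation $\aabracket{\OO^{(k)}_\alpha}_0(\bb)=\abracket{\OO^{(k)}_\alpha \exp(\sum b^\beta_m \OO^{(m)}_\beta)}_0$, the density of $I=\sum_{k,\alpha}\oint dz\, \eta^\alpha_k \aabracket{\OO^{(k)}_\alpha}_0(\bb)$ matches term by term the density $\mc L^H=\sum_{k,\alpha}\rho^\alpha_k \abracket{\OO^{(k)}_\alpha \exp(\sum \lambda^\beta_m \OO^{(m)}_\beta)}_0$ under the dictionary, so $\Phi(I)=\oint_C \J^{(1)}_{\mc L^H}$. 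Since Theorem \ref{thm-MC} gives $\delta I + \tfrac12\{I,I\}=0$ and $\Phi$ is a morphism of differential graded Lie algebras, applying $\Phi$ yields
$$\delta \oint_C \J^{(1)}_{\mc L^H} + \tfrac12\fbracket{\oint_C \J^{(1)}_{\mc L^H}, \oint_C \J^{(1)}_{\mc L^H}}_C = \Phi\!\left(\delta I + \tfrac12\{I,I\}\right)=0,$$
which is the desired Maurer--Cartan equation.

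I expect the only real subtlety to be the bookkeeping in verifying that $\Phi$ is well-defined and strict: one must check carefully that $\oint_C \J^{(1)}_{\mc L}$ depends on $\mc L$ only modulo total $z$-derivatives, so that the passage from densities to $\Xi$ is legitimate, and that the index conventions and Koszul signs in Proposition \ref{prop-bracket} agree precisely with those in \eqref{PSBracket}. Once this identification is pinned down the result is an immediate transport of Theorem \ref{thm-MC}; alternatively, if one prefers to avoid the morphism language, the proof of Theorem \ref{thm-MC} uses only the genus-$0$ correlators and \eqref{TRR}, which are unchanged here, so the same computation may be rerun verbatim with $b,\eta$ replaced by $\lambda,\rho$.
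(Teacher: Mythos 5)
Your argument is correct, but it takes a genuinely different route from the paper's. The paper proves this proposition by expanding the classical BV master equation of the BCOV interaction (Proposition \ref{prop-CME}): writing $Q I^H_{\C^\times}+\tfrac{1}{2}\fbracket{I^H_{\C^\times},I^H_{\C^\times}}_{\mr{BV}}=\int_{\C^\times}\J^{(2)}_{\delta\L^H+\frac{1}{2}[\L^H,\L^H]}=0$, it observes that the density $\delta\L^H+\tfrac{1}{2}[\L^H,\L^H]$ contains only holomorphic derivatives, so its vanishing as a local functional forces it to be a total $\pa_z$-derivative, whence $\oint_C\J^{(1)}_{\delta\L^H+\frac{1}{2}[\L^H,\L^H]}=0$, which by Proposition \ref{prop-bracket} is exactly the Maurer--Cartan equation. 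You instead bypass Proposition \ref{prop-CME} entirely and transport Theorem \ref{thm-MC} through the dictionary $b^\alpha_k\leftrightarrow\lambda^\alpha_k$, $\eta^\alpha_k\leftrightarrow\rho^\alpha_k$, exhibiting $\L\mapsto\oint_C\J^{(1)}_{\L}$ as a strict morphism of differential graded Lie algebras $(\Xi,\delta,\{-,-\})\to(\Obs^{\oint_C},\delta,\fbracket{-,-}_C)$ carrying $I$ to $\oint_C\J^{(1)}_{\L^H}$. Both arguments ultimately rest on TRR (the paper's through Proposition \ref{prop-CME}, yours through Theorem \ref{thm-MC}), and the paper itself remarks after its proof that the proposition ``is basically equivalent to Theorem \ref{thm-MC}''; indeed, your transport argument is precisely the identification the paper deploys for the \emph{next} proposition (commutativity of the $\oint_C\J^{(1)}_{\mc G_{k,\alpha}}$). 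What the paper's route buys is the conceptual point of the section: the hierarchy's Maurer--Cartan equation is exhibited as a shadow of the BV classical master equation, which is the form of the statement that persists in the compactification picture of Subsection \ref{Product geometry and push-forward}. Your route is shorter but keeps the statement internal to the formal hierarchy of Subsection \ref{subsection:Dispersionless hierarchy}, so it does not by itself establish the advertised link to BCOV theory.

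One caveat on the step you defer as bookkeeping: the claim $\oint_C\J^{(1)}_{\pa_z\mc M}=0$ does not follow from ``periodicity in $z$'' alone. Evaluated on actual BCOV fields, which depend on both $z$ and $\bar z$, the holomorphic derivative $\pa_z$ is not the derivative along $C$ (that is $\pa_z+\pa_{\bar z}$), so $\oint_C dz\,\pa_z g$ need not vanish as an honest function on $\E^H$. The identity is cleanest if one regards $\Obs^{\oint_C}$ as indexed by densities in $\C\llb \pa_z^m\lambda^\alpha_k,\pa_z^m\rho^\alpha_k\rrb$ modulo formal total $\pa_z$-derivatives, which is also what is needed for the bracket $\fbracket{-,-}_C$ of the Corollary to be well defined. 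The paper's own final step relies on the same convention, so this is a shared rather than a new subtlety---but it is the one point in your proof where the bookkeeping is not purely notational.
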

\begin{proof} We expand the classical master equation for $I_{\C^\times}^H$:
\begin{align*}
0=&~Q I_{\C^\times}^H+\frac{1}{2}\fbracket{I_{\C^{\times}}^H, I_{\C^{\times}}^H}_{\mr{BV}}\\
 =&\int_{\C^\times} \J_{\delta\mc  L^H}^{(2)}+\frac{1}{2}\int_{\C^\times} \J_{[\mc L^H,\mc L^H]}^{(2)}=\int_{\C^\times} \J_{\delta\mc L^H+\frac{1}{2}[\mc L^H,\mc L^H]}^{(2)}.
\end{align*}
where we used the observation that the differential $Q$ on local functionals precisely corresponds to the differential $\delta$ on the power series. This shows that $ \J_{\delta\mc L^H+\frac{1}{2}[\mc L^H,\mc L^H]}^{(2)}$ is a total derivative in $\pa_z, \pa_{\bar z}$. Observe that $\delta\mc L^H+\frac{1}{2}[\mc L^H,\mc L^H]$ only contains holomorphic derivatives. This implies that $ \J_{\delta\mc L^H+\frac{1}{2}[\mc L^H,\mc L^H]}^{(1)}$ is a total derivative in $\pa_z$'s only. Therefore, by Proposition \ref{prop-bracket}, we obtain
$$0=\oint_C\J_{\delta\mc L^H+\frac{1}{2}[\mc L^H,\mc L^H]}^{(1)}=\delta  \oint_C \J_{ \mc L^H}^{(1)}+{1\over 2}\fbracket{\oint_C \J_{\mc L^H}^{(1)},\oint_C \J_{\mc L^H}^{(1)}}_C.\qedhere$$
\end{proof}


This proposition is basically equivalent to Theorem \ref{thm-MC}. Here we have made its connection with the classical BV master equation via the induced Poisson bracket on current observables. If we specialize to the part where $\lambda_m^\beta=0$ for $m\geq 1$ and $\rho_k^\alpha$ is constant, called the \emph{stationary sector} and denoted by $\E^H_S$, then we are led to the following proposition.

\begin{prop} Let us define
$$
   \mc G_{k,\alpha}:=  \abracket{\OO^{(k)}_\alpha \exp\bracket{\sum \lambda^\beta_0 \OO_\beta} }_0 \in \C\llb \lambda^\alpha_0\rrb .
$$
Then they give rise to commuting observables with respect to the Lie bracket $\{-,-\}_C$
$$
\fbracket{\oint_C  \J_{\mc G_{k,\alpha}}^{(1)}, \oint_C  \J_{\mc G_{m,\beta}}^{(1)}}_C=0, \quad \forall k,m, \alpha, \beta.
$$
\end{prop}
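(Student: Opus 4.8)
The plan is to obtain the statement by restricting to the stationary sector the Maurer--Cartan equation
$$\delta\oint_C\J^{(1)}_{\mc L^H}+{1\over 2}\fbracket{\oint_C\J^{(1)}_{\mc L^H},\oint_C\J^{(1)}_{\mc L^H}}_C=0$$
proved in the preceding proposition, in exact analogy with how Proposition~\ref{prop:dispersionless hierarchy} was deduced from Theorem~\ref{thm-MC}. The first point to record is that $\delta$ vanishes identically on $\E^H_S$: since $\delta\lambda^\alpha_k=\pa_z\rho^\alpha_{k-1}$ and $\delta\rho^\alpha_k=0$, and the $\rho^\alpha_k$ are constants in the stationary sector, every generator is $\delta$-closed there.

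Next I would identify the restriction of the master field. Setting $\lambda^\beta_m=0$ for $m\geq 1$ in
$$\mc L^H=\sum_{k,\alpha}\rho^\alpha_k\abracket{\OO^{(k)}_\alpha\exp\bracket{\sum\lambda^\beta_m\OO^{(m)}_\beta}}_0$$
collapses the exponential to $\exp(\sum\lambda^\beta_0\OO_\beta)$, so that $\mc L^H|_{\E^H_S}=\sum_{k,\alpha}\rho^\alpha_k\mc G_{k,\alpha}$ and hence $\oint_C\J^{(1)}_{\mc L^H}\big|_{\E^H_S}=\sum_{k,\alpha}\rho^\alpha_k\oint_C\J^{(1)}_{\mc G_{k,\alpha}}$. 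Restricting the Maurer--Cartan equation to $\E^H_S$ and using $\delta|_{\E^H_S}=0$ then leaves $\fbracket{\oint_C\J^{(1)}_{\mc L^H},\oint_C\J^{(1)}_{\mc L^H}}_C\big|_{\E^H_S}=0$.

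The step I expect to require the most care is checking that restriction to $\E^H_S$ is compatible with the bracket, i.e.\ that the bracket of the restrictions equals the restriction of the bracket. This holds because, by Proposition~\ref{prop-bracket}, $\{-,-\}_C$ only differentiates with respect to the even symbols $\pa_z^k\lambda^\alpha_0$ and merely multiplies the odd constants $\rho^\alpha_k$: the operator $\pa/\pa\lambda^\alpha_0$ commutes with setting $\lambda^\beta_{m\geq1}=0$, and constancy of $\rho$ kills the $\pa_z\rho$ terms that could otherwise obstruct the factorization. Granting this, substituting the expansion above yields
$$\sum_{k,\alpha,m,\beta}\rho^\alpha_k\rho^\beta_m\fbracket{\oint_C\J^{(1)}_{\mc G_{k,\alpha}},\oint_C\J^{(1)}_{\mc G_{m,\beta}}}_C=0.$$

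Finally I would extract the commutativity. Since the $\mc G_{k,\alpha}$ contain no odd variables, the current observables $\oint_C\J^{(1)}_{\mc G_{k,\alpha}}$ are even and their $\{-,-\}_C$-bracket is antisymmetric under $(k,\alpha)\leftrightarrow(m,\beta)$; as the product $\rho^\alpha_k\rho^\beta_m$ of Grassmann constants is likewise antisymmetric, each off-diagonal monomial appears with coefficient $2\fbracket{\oint_C\J^{(1)}_{\mc G_{k,\alpha}},\oint_C\J^{(1)}_{\mc G_{m,\beta}}}_C$, while the diagonal terms drop out because $(\rho^\alpha_k)^2=0$. Linear independence of the distinct products $\rho^\alpha_k\rho^\beta_m$ in the Grassmann algebra then forces every bracket to vanish, giving $\fbracket{\oint_C\J^{(1)}_{\mc G_{k,\alpha}},\oint_C\J^{(1)}_{\mc G_{m,\beta}}}_C=0$ for all $k,m,\alpha,\beta$, as claimed.
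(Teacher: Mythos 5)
Your proposal is correct and takes essentially the same route as the paper: the paper identifies $\lambda^\alpha_k,\rho^\alpha_k$ with $b^\alpha_k,\eta^\alpha_k$ and then repeats the stationary-sector argument of Proposition \ref{prop:dispersionless hierarchy}, which is precisely your restriction of the Maurer--Cartan equation for $\oint_C \J^{(1)}_{\mc L^H}$ to $\E^H_S$ (where $\delta$ vanishes) followed by extraction of the individual brackets from the Grassmann coefficients $\rho^\alpha_k\rho^\beta_m$. The two points you flag explicitly --- compatibility of the bracket with restriction to the stationary sector, and linear independence of the products of the fermionic constants --- are left implicit in the paper but are handled correctly in your write-up.
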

\begin{proof} This follows from the discussion in Subsection \ref{subsection:Dispersionless hierarchy}. In fact, we can identify $\lambda^\alpha_k, \rho^\alpha_k$ with the fields $b^\alpha_k, \eta^\alpha_k$ in Subsection \ref{subsection:Dispersionless hierarchy} .
$$
\lambda^\alpha_k \Leftrightarrow b^\alpha_k, \quad \rho^\alpha_k\Leftrightarrow \eta^\alpha_k.
$$
Under this identification, the current observable $\oint_C \J_{\mc L^H}^{(1)}$ becomes precisely the function $I$ in Subsection \ref{subsection:Dispersionless hierarchy}; the differential $Q$ becomes the differential $\delta$;  the bracket $\{-,-\}_C$ is identified with $\{-,-\}$.
The current observables $\oint_C  \J_{\mc G_{k,\alpha}}^{(1)}$ are identified with the Hamiltonian functions $\oint dz\ G_{k,\alpha}$. The proposition now follows from the same argument as in Subsection \ref{subsection:Dispersionless hierarchy}.
\end{proof}

This proposition is our interpretation of dispersionless integrable hierarchy via the classical BV master equation in BCOV theory.

\subsection{B-model interpretation}\label{Product geometry and push-forward}
In this subsection, we describe the B-model origin of $H$-valued BCOV theory using product of Calabi--Yau geometry and the recipe in Subsection \ref{sec:comp}. This will motivate our constructions in previous subsections and we will be sketchy.

Let $X, Y$ be two Calabi--Yau manifolds. We consider BCOV theory on $X\times Y$ and its compactification along $\pi_Y\colon X\times Y \to Y$. BCOV fields on $X\times Y$ are
$$
  \PV(X\times Y)\llb t\rrb=\PV(X)\otimes \PV(Y)\llb t\rrb.
$$
Let $\pa_X, \pa_Y$ denote the divergence operators on $\PV(X), \PV(Y)$ with respect to the corresponding Calab-Yau volume form. Let $\delta_X, \delta_Y$ be the $\delta$-distribution (with respect to the trace map)  of BCOV theory on $X, Y$ respectively. Then the BV kernel on $X\times Y$ is
$$
  K_0^{X\times Y}=\bracket{\pa_X+\pa_Y}\delta_X\otimes \delta_Y=K_0^X\otimes \delta_Y+ \delta_X\otimes K_0^Y.
$$
Here $K^X_0, K^Y_0$ are integral kernels (with respect to the trace map) of the divergence operators $\pa_X, \pa_Y$ on $X, Y$ respectively.

Let us choose a K\"{a}hler metric on $X$. Hodge theory implies the identification
$$
  H^\bullet(\PV(X)\llb t\rrb, \dbar_X+t\pa_X)\iso \H_X\llb t\rrb
$$
 where $\H_X$ are harmonic elements of $\PV(X)$.

  We consider the following contraction data where
 $$
 i\colon \ \ \H_X\otimes \PV(Y)\into \PV(X)\otimes \PV(Y)\llb t\rrb \quad \pi\colon\ \  \PV(X)\otimes \PV(Y)\llb t\rrb\to \H_X\otimes \PV(Y)\llb t\rrb
 $$
 are the harmonic embedding and projection. The parametrix is given by
 $$
    P=P_X\otimes \delta_Y+ G_X\otimes K_0^Y
 $$
 where
 $$
 P_X= \int_0^\infty (\dbar^*_X\pa_X\otimes 1)h_t^X, \quad G_X=\int_0^\infty (\dbar^*_X\otimes 1)h_t^X.
 $$
 Here $h_t^X$ is the heat kernel of $e^{-t[\dbar_X, \dbar^*_X]}$. We have
 $$
 K_0^{X\times Y}=\delta_{\H_X}\otimes K_0^Y+ \bbracket{Q_X, P}
 $$
 where $\delta_{\H_X}=h_{\infty}^X$ represents the integral kernel of the identity operator on $\H_X$.

Let $I_{X\times Y}$ be the BCOV action on $\PV(X)\otimes \PV(Y)\llb t\rrb$ as described in \cite{Si-Kevin}.  This is the same as in Definition \ref{vertex contribution} when $H=\C$ comes from the pure gravity. Since $P$ gives a homotopy between $K_0^{X\otimes Y}$ and $\delta_{\H_X}\otimes K_0^Y$, we define
$$
I_{\pi_Y}:=\sum_{\Gamma:\mr{trees}}W_\Gamma(I_{X\times Y}, P).
$$
By the same argument as in Theorem \ref{thm-push},  the functional $I_{\pi_Y}$ defines a solution of classical master equation for the BV data $\bracket{\H_X\otimes \PV(Y)\llb t\rrb , Q_Y, \delta_{\H_X}\otimes K_0^Y}$
$$
Q I_{\pi_Y}+{1\over 2}\{I_{\pi_Y},I_{\pi_Y}\}_{\mr{BV}}=0.
$$

Now we specialize to the case when $Y=\C$ (or $\C^\times$ or elliptic curve). $I_{\pi_\C}$ is a local functional on $\PV(\C)\otimes  \H_X\llb t\rrb$. It is built up from tree diagrams with $P=P_X\otimes \delta_\C+ G_X\otimes K_0^\C$ being the propagator. Observe that if we place the factor $G_X\otimes K_0^\C$ on the edge of some tree diagrams, it will produce a local functional containing holomorphic derivatives along $\C$.

Let $I^{D}_{\pi_\C}$ be the local functional collecting all those terms in $I_{\pi_\C}$ which do not contain any derivatives along $\C$. $I^{D}_{\pi_\C}$ can be viewed as a dispersionless limit of $I_{\pi_\C}$. Alternately, $I^{D}_{\pi_\C}$ is given by tree diagrams with $P_X\otimes \delta_\C$ being the propagator. Let $\abracket{-}_0$ be the B-model correlation function on $\H_X\llb t\rrb$, whose generating function is given by sum of tree diagrams for BCOV theory on $\PV(X)\llb t\rrb$. Then $I^{D}_{\pi_\C}$ defines the interaction of our $\H_X$-valued BCOV theory on $\C$ as in Definition \ref{vertex contribution}.

 By using a filtration by the number of derivatives, we can show that  $I^{D}_{\pi_\C}$ satisfies the same classical master equation as $I_{\pi_\C}$
$$
Q I^{D}_{\pi_\C}+{1\over 2}\{I^{D}_{\pi_\C},I^{D}_{\pi_\C}\}_{\mr{BV}}=0.
$$
By the argument in Subsection \ref{sec:current-integrable}, this gives the classical dispersionless integrable hierarchy for the B-model on $X$.

\subsection{Hierarchy from internal symmetries of BCOV theory}

The main claim of this subsection is that there are infinite-dimensional background fields in BCOV theory, which we can regard as coming from infinite-dimensional abelian symmetries. Noether's theorem implies that these should yield infinitely many mutually commuting current observables; this is the underlying reason why BCOV theory gives rise to integrable hierarchy. In this subsection, we flesh out this idea in some detail for the case of $H=\C$; the essential feature is all present in this case. As this is also purely motivational, we will be rather sketchy.

Let us consider BCOV theory on $Y$ of dimension $d$; we have $\E=\PV(Y)\llb t\rrb$ with the differential $Q=\bar {\pa }+t \pa $. As the propagator of BCOV theory pairs $\PV^{i,\bullet} (Y)$ with $\PV^{d-1-i,\bullet}(Y)$ for each $0 \leq i \leq d-1$, the space of \emph{dynamic fields} is the minimal cochain complex (respecting the interaction term) containing $\PV^{i,\bullet}$ for $i\leq d-1$. This is given by \[ \E_{ \mr{D} }= \bigoplus_{k+i \leq d-1} t^k \PV^{i,\bullet}(Y).\] The space of \emph{background fields}, which by definition are complementary in $\E$ to the dynamic fields, is given by \[\E_{ \mr{B} }= \bigoplus_{k\geq 0}  \left(t^k\PV^{d,\bullet} (Y) \to t^{k+1}\PV^{d-1,\bullet}(Y) \to \cdots \to t^{k+d}\PV^{0,\bullet}(Y)  \right).\]

\begin{prop}
The space of background fields $\E_{ \mr{B} }$ is quasi-isomorphic to $\C \llb t\rrb$.
\end{prop}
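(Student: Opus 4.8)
The plan is to use the fact that, within $\E_{\mr B}$, the only part of $Q=\dbar+t\pa$ that changes the power of $t$ is $t\pa$, which raises $k$ by one and lowers the polyvector degree $i$ by one, while $\dbar$ leaves both unchanged. Writing a background field as $\sum_{k\ge 0}\sum_{i=0}^{d}t^{\,k+d-i}\mu_{k,i}$ with $\mu_{k,i}\in\PV^{i,\bullet}(Y)$, a short bookkeeping check shows that both $\dbar$ and $t\pa$ preserve the index $k$, so $Q$ respects the direct sum over the diagonal ``strands''. Thus I would first reduce to a single strand: each summand is isomorphic, as a complex, to
$$
\CC:=\Big(\PV^{d,\bullet}(Y)\xrightarrow{\pa}\PV^{d-1,\bullet}(Y)\xrightarrow{\pa}\cdots\xrightarrow{\pa}\PV^{0,\bullet}(Y)\Big)
$$
equipped with the total differential $\dbar+\pa$ (the powers of $t$ only record which strand we are on), and $\HH(\E_{\mr B},Q)=\HH(\CC,\dbar+\pa)\llb t\rrb$, the $k$-th strand contributing the factor $t^k$.

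Next I would identify $\CC$ with the de Rham complex of $Y$. Contracting against the holomorphic volume form gives isomorphisms $\PV^{i,\bullet}(Y)\xrightarrow{\ \vdash\,\Omega_Y\ }\Omega^{d-i,\bullet}(Y)$ under which $\pa$ becomes the holomorphic de Rham differential and $\dbar$ is unchanged; hence $\CC$ becomes the total complex of the full Hodge double complex $(\Omega^{\bullet,\bullet}(Y),\pa+\dbar)$, whose cohomology is the de Rham cohomology $\HH^\bullet_{\mr{dR}}(Y)$. To compute it I would run the spectral sequence that takes $\dbar$-cohomology first: for the affine local models at hand (e.g.\ $Y=\C$, which is Stein) the Dolbeault groups $\HH^q_{\dbar}(Y,\wedge^i T_Y)$ vanish for $q>0$, so the $E_1$-page collapses onto the holomorphic de Rham complex $\OO(Y)\xrightarrow{\pa}\Omega^1_{\mr{hol}}(Y)\to\cdots$, and its cohomology is $\C$ concentrated in degree $0$, represented by the constant function $1$. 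The corresponding class in $\CC$ is the constant holomorphic $d$-vector field $\mathbf 1\in\PV^{d,0}(Y)$ dual to $\Omega_Y$, which is manifestly annihilated by both $\dbar$ and $\pa$.

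Assembling the strands then yields $\HH(\E_{\mr B},Q)=\bigoplus_{k\ge0}\C\,[\,t^k\mathbf 1\,]=\C\llb t\rrb$. To upgrade this to an actual quasi-isomorphism I would exhibit the explicit cochain map $\C\llb t\rrb\to\E_{\mr B}$, $t^k\mapsto t^k\mathbf 1$, which lands in $Q$-closed elements and, by the computation above, induces an isomorphism on cohomology. The main obstacle is precisely the single-strand cohomology computation on the noncompact $Y$: one must fix the appropriate function space and verify both the Stein vanishing of higher Dolbeault cohomology and the acyclicity of the holomorphic de Rham complex away from degree $0$. The cleanest route is to build an explicit contracting homotopy on $\CC$ (inverting $\pa$ against $\dbar^*$ strand by strand), which produces the quasi-isomorphism directly; this is exactly the step where the analytic care about the chosen completion of $\PV(Y)$ is required.
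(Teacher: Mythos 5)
Your proposal is correct and follows essentially the same route as the paper: both decompose $\E_{\mr{B}}$ into its $t$-strands and use contraction with $\Omega_Y$ to turn each strand, with its differential $\dbar+t\pa$, into a copy of the smooth de Rham complex of $Y$. The only difference is one of detail: where the paper jumps directly from ``a copy of the de Rham complex'' to $\E_{\mr{B}}\simeq\C\llb t\rrb$, you supply the missing computation $H^\bullet_{\mr{dR}}(Y)\cong\C$ via Dolbeault vanishing plus the holomorphic Poincar\'e lemma, correctly flagging that this is the real hypothesis behind the statement (valid for $Y=\C$, though not for the paper's other cases $\C^\times$ and $E_\tau$, where $H^1_{\mr{dR}}(Y)\neq 0$).
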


\begin{proof}
Consider the isomorphism $\PV(Y)\llp t \rrp \simeq \Omega(Y)\llp  t \rrp$ given by $t^k \alpha \mapsto t^{k+i-d}\alpha \vdash \Omega_Y$ for $\alpha \in \PV^{i,\bullet}(Y)$. This leads to a quasi-isomorphism
\[\xymatrix{
t^k\PV^{d,\bullet}\ar[r]^-{t\pa} \ar[d]_\simeq &  t^{k+1}\PV^{d-1,\bullet} \ar[r]^-{t\pa} \ar[d]^\simeq & \cdots \ar[r]^-{t\pa} &  t^{k+d}\PV^{0,\bullet} \ar[d]^\simeq  \\
t^k \Omega^{0,\bullet}  \ar[r]^-{\pa} & t^k \Omega^{1,\bullet} \ar[r]^-{\pa} & \cdots \ar[r]^-{\pa} & t^k\Omega^{d,\bullet}
}\]
Here $\pa$ in the above line is the divergence operator, and $\pa$ in the below line is the holomorphic de Rham differential. In other words, for each $k\geq 0$, we have a copy of de Rham complex $t^k \Omega_Y^\bullet$ with the usual degree of the complex. Thus we obtain $\E_{\mr{B}} \simeq\C \llb t \rrb $.
\end{proof}

To distinguish from other forms of a local functional, we write the BCOV interaction term as $I^{\mr{BCOV}}$; then $I^{\mr{BCOV}} \in \Ol(\mc E) = \Ol(\mc E_{\mr{D}}\oplus \mc E_{\mr{B}})$ can be understood as a map $I^{\mr{BCOV}}\colon \mc E_{\mr{B}}\to \Ol(\mc E_{\mr{D}})$. In fact, this can be regarded as an $L_\infty$-map $\mc E_{\mr{B}}[-1]\to \Ol(\mc E_{\mr{D}})[-1]$ where the cochain complex $\mc E_{\mr{B}}[-1]$ is regarded as an $L_\infty$-algebra, thanks to the above proposition. Then, $\mc E_{\mr{B}}[-1]$ represents the infinite-dimensional abelian symmetry algebra and its image should define currents which are mutually commuting.

The case of interest for our paper is when $Y=\C^\times $. Recall that we write an arbitrary element $\mu \in \PV(\C^\times)\llb t \rrb$ as $\mu =\sum_{k\geq 0} (\lambda _k t^k + \rho _k \pa_z  t^k)$ where $\lambda_k\in \Omega^{0,\bullet}(\C^\times )$ and $\rho_k\in \Omega^{0,\bullet}(\C^\times )$. According to the above decomposition, we have
\[\mc E_{\mr{D}}=\PV^{0,\bullet}(\C^\times )\qquad \text{and}\qquad  \mc E_{\mr{B}} = \oplus_{k\geq 0} \left(t^k \PV^{1,\bullet}(\C^\times) \stackrel{t\pa }{\longrightarrow} t^{k+1}\PV^{0,\bullet}(\C^\times) \right)\]
where $\lambda_0\in \mc E_D= \PV^{0,\bullet}(\C^\times)$ and $\mc E_B$ consists of $(\rho_k,\lambda_{k+1})$ for each power $k$ of $t$. The above proposition says that up to cohomology we can represent $(\rho_k,\lambda_{k+1})$ by requiring $\rho_k=c_k$ to be constant and $\lambda_{k+1}=0$ for $k\geq 0$. One should think of this as the origin of the stationary sector $\mc E_S$ we have considered.

The BCOV action functional $I^{\mr{BCOV}} = \Tr \langle e^\mu\rangle_0$ is linear in $\rho_k$'s for a type reason. If we restrict to $\mc E_S$ where $\rho_k=c_k$ are constant and $\lambda_{k+1}=0$ for $k\geq 0$, and  define $I^{\mr{BCOV}}_S := I^{\mr{BCOV}}|_{\mc E_S}$, then one has
\[I^{\mr{BCOV}}_S = \Tr \abracket{ \sum_{k\geq 0}  \rho_k \pa_z t^k  e^{\lambda_0}}_0= \sum_{k\geq 0} \int_{\C^\times } \left(\abracket{ (\rho_k \pa_z t^k e^{\lambda_0})}_0 \vdash dz \right)\wedge dz = \sum_{k\geq 0 } \int_{\C^\times }dz\ \rho_k \abracket{t^k e^{\lambda_0}}_0   .\]
Now let us interpret this as a map $\mc E_{\mr{B}}[-1]\to \Ol(\mc E_D)[-1]$. That is, the image of $\rho_k \pa_z t^k\in \mc E_B[-1]$ with $\rho_k=c_k$ leads to the current observable $\lambda_0 \mapsto  \oint_C dz \abracket{t^k e^{\lambda_0} }_0  $. Under our identification for the interpretation of dispersionless integrable hierarchy in terms of BCOV theory, the current exactly becomes $G_k$. The situation is summarized in the  table:

\begin{center}
\begin{tabular}{|c|c|}
\hline
Dispersionless Integrable Hierarchy & BCOV Theory\\
\hline
 $b_k,\eta_k $  & $\lambda_k,\rho_k  $  \\
 \hline
 \multirow{2}{*}{ $I = \sum_{k } \oint dz \ \eta_k  \abracket{   t^k e^{ \sum b_k t^k } }_0 $} & $I^{\mr{BCOV}}= \int_{\C^{\times}}\J_{\mc L}^{(2)}=  \sum_k \int_{\C^\times } dz \ \rho_k \abracket{ t^k e^{\sum_m \lambda_m t^m }  }_0 $ \\
 &  $\oint_C \J_{\mc L}^{(1)}=\sum_k \oint_{C} dz \ \rho_k \abracket{ t^k e^{\sum_m \lambda_m t^m }  }_0 $   \\
\hline
 $I_S = \sum_{k} \oint dz\ c_k \abracket{  t^k e^{b_0} }_0  $  &  $ I^{\mr{BCOV}}_S = \sum_k \int_{\C^{\times}}\J_{\mc G_k}^{(2)}=   \sum_{k}  \int_{\C^\times } dz\  c_k \abracket{t^k e^{\lambda_0}}_0$  \\
 $  G_{k } = \abracket{   t^k e^{b_0} }_0  $ & $ \J_{\mc G_k}^{(1)}=dz\ \abracket{t^k e^{\lambda_0}}_0$  \\
\hline
\end{tabular}
\end{center}

\end{document}